\documentclass[letterpaper, 10 pt, conference]{ieeeconf}  % Comment this line out if you need a4paper

\IEEEoverridecommandlockouts                              % This command is only needed if 
\usepackage{subcaption}
\usepackage{balance}
\let\labelindent\relax
\usepackage{enumitem}
\usepackage{cite}
\usepackage{amsmath,amssymb,amsfonts}
\usepackage{graphicx}
\usepackage{textcomp}
\usepackage{xcolor}
\newtheorem{theorem}{\textbf{Theorem}}[section]
\newtheorem{lemma}[theorem]{\textbf{Lemma}}
\newtheorem{corollary}[theorem]{\textbf{Corollary}}
\newtheorem{definition}[theorem]{\textbf{Definition}}
\newtheorem{assumption}[theorem]{\textbf{Assumption}}

\usepackage[colorinlistoftodos]{todonotes}
\usepackage{color}
\usepackage{tcolorbox}

\newcommand{\bR} { {\mathbb R}}
\newcommand{\Ex}{\mathbb{E}}

\usepackage{mathtools} 
\DeclarePairedDelimiter{\norm}{\lVert}{\rVert}

\newcommand{\mc}[1]{\mathcal{#1}}

\usepackage{tikz}
\usetikzlibrary{arrows.meta,positioning}
\usepackage[ruled]{algorithm}
\usepackage{algpseudocode}

\algnewcommand\algorithmicinput{\textbf{Input:}}
\algnewcommand\Input{\item[\algorithmicinput]}
\algnewcommand\algorithmicoutput{\textbf{Output:}}
\algnewcommand\Output{\item[\algorithmicoutput]}
\algnewcommand\algorithmicsamples{\textbf{No. of Samples:}}
\algnewcommand\Samples{\item[\algorithmicsamples]}

\usepackage{textcomp}
\def\BibTeX{{\rm B\kern-.05em{\sc i\kern-.025em b}\kern-.08em
    T\kern-.1667em\lower.7ex\hbox{E}\kern-.125emX}}

\usepackage{optidef}

\title{\LARGE \bf
SafePG: Safe and Globally Optimal Reinforcement Learning \\with Hard Constraints
}

\author{Vipul K. Sharma, Wesley A. Suttle and S. Sivaranjani% <-this % stops a space
\thanks{Vipul K. Sharma and S. Sivaranjani are with the Edwardson School of Industrial Engineering at Purdue University, West Lafayette, IN 47907, USA.
        {\tt\small \{sharm697,sseetha\}@purdue.edu}}%}%
\thanks{Wesley A. Suttle is with U.S. Army Research Laboratory, Adelphi, MD 20783, USA.
{\tt\small
{wesley.a.suttle.ctr@army.mil}}
}}

\begin{document}

\maketitle
\thispagestyle{empty}
\pagestyle{empty}

%%%%%%%%%%%%%%%%%%%%%%%%%%%%%%%%%%%%%%%%%%%%%%%%%%%%%%%%%%%%%%%%%%%%%%%%%%%%%%%%

\begin{abstract}
We present an optimal and convergent model-free policy gradient (PG) reinforcement learning (RL) framework for controlling nonlinear dynamical systems under hard safety constraints. We first construct a class of stochastic wrapper policies centered around a deterministic controller, thereby enabling exploration in unknown environments while preserving the underlying deterministic control structure. We then define a class of parameterized safe-by-construction control policies by truncating these stochastic policies onto hard safety constraints. We next establish, via measure-theoretic arguments, that the potentially nonconvex RL objective under the truncated policy class, as well as its policy gradients, are well-defined. We then develop a model-free PG algorithm based on stochastic gradient ascent that directly searches over these truncated policies and leverage gradient dominance to establish convergence and optimality guarantees. Finally, we validate this framework through simulations on a safe quadrotor navigation problem.
\end{abstract}

\section{Introduction}
Reinforcement Learning (RL) has demonstrated remarkable success in learning control policies for complex systems by directly interacting with unknown environments. However, in safety-critical applications such as autonomous driving or  robotics, hard constraints on states and actions must be satisfied at all times, not only by the final controller but also during learning. This is particularly challenging in model-free RL, where stochastic exploration is used to improve the underlying policy. In such settings, pointwise safety requirements must hold along the realized closed-loop trajectories generated during exploration. %This requirement poses a fundamental challenge for standard model-free RL algorithms, which typically rely on unconstrained stochastic exploration strategies that can inherently drive a system into unsafe states before an effective policy is established.

In the RL literature, safety is typically addressed via constrained Markov decision processes, primal-dual methods, or other penalty-based approaches  \cite{altman2021constrained,achiam2017constrained}. While these methods incorporate safety constraints as penalties within a reward function to discourage violations in expectation, probability, or cumulative cost,  they do not in general guarantee pointwise satisfaction of hard constraints along the realized trajectories encountered during learning. In the control literature, hard constraints are often handled through model-based constrained control methods, particularly MPC, or learning-based MPC when sufficiently accurate predictive models are available or can be learned  \cite{hewing2020learning,zanon2020safe}. In learning-based control, hard constraints are also commonly enforced through projection-based mechanisms such as control barrier function (CBF)-based      ``safety filters'', where a model-free RL algorithm is used to learn a potentially unsafe controller, whose actions are subsequently projected onto safe sets \cite{brunke2022safe,ames2019control,sharma2023safe,cheng2019end, yang2025cbf}. While these projection-based filters effectively ensure safe exploration, the projection step alters the optimization landscape  and breaks the underlying optimality and convergence guarantees of the unconstrained RL algorithm \cite{suttle2024sampling}. Another line of work involves imposing hard constraints more directly through constrained neural network
optimization layers \cite{pham2018optlayer}, reduced-gradient updates \cite{ding2023reduced}, or constrained policy parameterizations \cite{suttle2024sampling}. While these methods are closer in spirit to the present setting, their  guarantees are often limited either to particular policy classes or to weaker notions of optimality.

In this paper, we address RL under hard safety constraints via policy optimization, while ensuring pointwise constraint satisfaction during exploration and establishing convergence and optimality guarantees. The main contributions of this paper are as follows: 
\begin{itemize}
    \item First, we formulate the safe RL problem with hard constraints as optimization over a truncated stochastic policy class. We propose a class of safe-by-construction policies by introducing a stochastic wrapper policy  centered around a parameterized deterministic state-feedback law and truncating its support to a given safe control set. Thus, by optimizing over this class of policies, every sampled control input satisfies hard constraints by construction, rather than through a separate projection or filtering step. Such stochastic wrapper policies have been recently employed to introduce exploration into  policy-gradient methods for deterministic continuous control settings \cite{montenegro2024learning,sharma2025globally}, but have not thus far been  employed in safe learning. The proposed framework can also accommodate a wide class of state-action constraints that are encountered in practice, including CBF-type conditions. 
    \item Second, the truncated safe policy construction alters the structure of the underlying policy optimization problem, as the resulting exploration is no longer given by the unconstrained noise processes used in standard policy-gradient analyses. Rather, the induced exploration becomes state-dependent through the closed-loop dynamics. We  use measure-theoretic arguments to establish that the resulting RL objective and its gradients are well-defined under this truncated safe policy exploration, allowing us to define valid policy optimization algorithms in this setting.
    \item  Third, we derive the score-function and policy-gradient expressions for the truncated safe policy class,  and develop a model-free policy gradient (PG) algorithm - \texttt{SafePG}, and leverage gradient dominance (Polyak-Łojasiewicz conditions), to establish convergence to a global optimum within the constrained safe policy class. This is in contrast to prior work on RL with constrained  policy parameterizations \cite{suttle2024sampling}, where the available guarantees are typically limited to convergence to stationary points. Finally, we demonstrate the proposed framework on a safe quadrotor navigation problem.
\end{itemize}
  
In summary, we believe that the proposed framework is among the first to enable pointwise satisfaction of hard safety constraints during  exploration in model-free RL while preserving rigorous policy optimization guarantees.

\subsection{Organization}
This paper is organized as follows. We first formulate the safe RL problem in Section \ref{sec:formulation} with CBF-based safety constraints for nonlinear dynamical systems. Then, in Section \ref{sec:safe-pol}, we define a new class of control policies that are safe by construction and define our model-free safe RL problem over these policies. We show that such a problem is well-defined and then provide policy gradient algorithm to update the safe policy and optimize the RL objective. We show optimality and convergence of the proposed algorithm in Section \ref{sec:opt} and validate our safe RL framework on a safe quadrotor navigation problem in Section \ref{sec:exp}. Finally, we present the proofs of all our theoretical results in the Appendix.

\subsection{Notation}
We denote the sets of real numbers and $n$-dimensional real vectors by $\mathbb{R}$ and $\mathbb{R}^n$, respectively. For a matrix $A$, $A^\top$ denotes its transpose. We use $\langle \cdot, \cdot \rangle$ and $\|\cdot\|$ to denote the standard Euclidean inner product and norm. Given $\epsilon > 0$ and $x \in \mathbb{R}^n$, $B_\epsilon(x)$ denotes the open ball of radius $\epsilon$ centered at $x$. 
A policy is said to be $\theta$-parameterized if it depends on a vector $\theta \in \Theta$, where $\Theta$ is the set of permissible policy parameters, and $\theta$-differentiable if its partial derivative, with respect to $\theta$ exists. We use $\mathbb{E}_{z_1,z_2}$ to denote expectation $\mathbb{E}_{z_1} \mathbb{E}_{z_2}$ for random variables $z_1$ and $z_2$. We define a measurable space as a pair $(X, X_\sigma)$ consisting of a 
set $X$ and a $\sigma$-algebra $X_\sigma$ on $X$. The elements of 
$X_\sigma$ are called measurable sets. Lastly, for two Lebesgue measures $\nu_1$ and $\nu_2$, we say $\nu_1 \ll \nu_2$ iff $\nu_1$ is absolutely continuous with respect to $\nu_2$.

\section{Problem Formulation}\label{sec:formulation}
We begin by formulating the model-free safe reinforcement learning problem addressed in this paper.
\subsection{Markov Decision Process (MDP)}
We consider a discounted MDP $(\mathcal{X},\mathcal{U},\mathcal{P}, c ,\gamma)$, where the underlying transition kernel denoted by $\mathcal{P}(x_{t+1}|x_t,u_t)$ is induced by the deterministic nonlinear control-affine dynamics 
\begin{equation}\label{eq:dynamics}
    % x_{t+1} =  f(x_t) + g(x_t)u_t + d_t , \ d_t \sim \mathcal{D} % \ x(0) \sim \mathcal{D},
    x_{t+1} =  f(x_t) + g(x_t)u_t % \ x(0) \sim \mathcal{D},
\end{equation}
where $x_t \in \mathcal{X} \subseteq \mathbb{R}^n$ and $u_t \in \mathcal{U} \subseteq \mathbb{R}^m$ are the state and control input respectively at time step $t$, functions $f: \mathcal{X}\to \mathbb{R}^n$, $g: \mathcal{X}\to \mathbb{R}^{n\times m}$ are locally Lipschitz over $\mathcal{X}$. {We assume that the initial state $x_0$ is sampled according to the distribution $\mc{D}$, i.e. $x_0 \sim \mc{D}$.}
% , and $d_t$ represents the unmodeled dynamics, sampled according to distribution $\mathcal{D}$. 
% This dynamics define the transition probabilities $\mathcal{P}$ as
% \begin{equation}
%     \mathcal{P}(x_{t+1}|x_t,u_t)=1,
% \end{equation}
Moreover, $c(x_t,u_t)$ is the cost at time step $t$, and $\gamma \in [0,1]$ is the discount factor. Throughout the paper, we assume that the dynamics and hence the transition probabilities of the MDP are unknown.
% the long-horizon reward function is given by
% \begin{equation}
%     J(x_0) = \sum_{t=0}^{\infty} \gamma^t r(x_t,u_t).
% \end{equation}

\subsection{Safety}
% Our notion of safety is concerned with satisfaction of time-varying state-action coupled constraints, given by
% \begin{equation}
    
% \end{equation}
Our notion of safety is concerned with keeping the system inside a prescribed subset of the state-space $\mc{X}$, called the safe set. We define this safe set as a zero-super-level of a continuously differentiable function $h$, given by
\begin{equation}\label{eq:safe_set}
    \mathcal{S} = \{x \in \mathcal{X} : h(x) \geq 0\}.
\end{equation}
Then, safety can be guaranteed, under the dynamics \eqref{eq:dynamics} by imposing a forward invariance condition given by %choosing the a control input $u_t$ such that
\begin{equation}\label{eq:orig-cbf}
    h(x_{t+1}) - (1-\alpha) h(x_t) \geq 0.
\end{equation}
A continuously differentiable function $h$ satisfying this condition is termed a control barrier function (CBF). The above condition induces the set of safe control inputs at time step $t$, termed the \textit{safe control set}, given by
\begin{equation} 
    C(x_t) = \{u_t \in \mathcal{U}: h(x_{t+1}) - (1-\alpha) h(x_t) \geq 0\}.
\end{equation}

\subsection{Safe Reinforcement Learning (RL) Problem}
We now state the safe RL problem under hard safety constraints over the MDP in \eqref{eq:dynamics}, with $\gamma\in [0,1]$, given the point-wise hard safety constraint $u_t \in C(x_t)$ for all $t \geq 0$. We solve the following optimization problem to find an optimal policy $\pi$, with $u_t \sim \pi(\cdot|x_t)$, that solves
%
% \chck{we need to say something like $u \sim pi$?}\vks{check}
%
\begin{mini}
    {\pi}{ J :=   \mathbb{E}_{x_0,  \pi}\sum_{t=0}^\infty \gamma^t c(x_t,u_t)  }{}{} \label{opt:constrained_lqr_control}
    %
    % \addConstraint{x_{t+1} }{= f(x_t)+}{ B u_t, }{\quad x_0 \sim \mathcal{D}}{}
    %
    \addConstraint{u_t \in C(x_t),} {} {\quad  x_{t+1} =  f(x_t) + g(x_t)u_t}{ } {},
\end{mini}
% \begin{mini}
%     {}{ J(x_0) := \mathbb{E} \left[ \sum_{t=0}^\infty \gamma^t r(x_t,u_t) \right] }{}{} \label{opt:constrained_lqr_control}
%     %
%     % \addConstraint{x_{t+1} }{= f(x_t)+}{ B u_t, }{\quad x_0 \sim \mathcal{D}}{}
%     %
%     \addConstraint{u_t \in C(x_t),} {} {\quad  x_{t+1} =  f(x_t) + g(x_t)u_t}{ } {},
% \end{mini}
without knowledge of the MDP transition probabilities. 
\section{Safe Policy Optimization \\with Hard Constraints}\label{sec:safe-pol}
In this section, we define a class of truncated safe policies that guarantees the safety constraints by construction, and then redefine the RL objective over these safe-by-construction policies. Thereafter, we analyze the well-definedness of this new objective, paving the way to justify using a policy gradient algorithm to update these safe policies. Finally, we obtain  model-free policy gradient expressions, and then provide the resultant model-free policy gradient algorithm for our model-free safe RL problem.

\subsection{Safe Exploration with Deterministic Policies}
Control tasks often employ a deterministic policy, denoted by $\mu_\theta$ with $\theta \in \Theta$. For example, $\mu_\theta$ can be a linear state-feedback policy $\mu_{\theta}(x_t)=\theta x_t$. More generally, $\mu_{\theta}$ may denote any differentiable deterministic policy parameterization, including neural networks. 
% , such as PID control with $\theta =\begin{bmatrix}
%     K_P & K_I & K_D
% \end{bmatrix}^\top$. 
When the dynamics is unknown, $\mu_\theta$ can be learned using exploration while interacting with the environment in an RL framework. To enable exploration, we add a  zero-mean Gaussian noise $\epsilon_t \sim \mc{N}(\textbf{0}_{m}, \Sigma_t)$ to the deterministic $\mu_\theta$ to obtain
    \begin{equation}\label{eq:unsafe_pi}
        \pi_\theta(\cdot|x_t) \sim \mc{N}(\mu_\theta(x_t),\Sigma_t),
    \end{equation}
    where $\Sigma_t \in \bR^{m \times m}$ is the covariance matrix. We refer to $\epsilon_t$ as the {exploration noise}.
\begin{assumption}\label{asum:mu_diff}
    We assume that the $\theta$-parametrized deterministic policy is $\theta$-differentiable.
\end{assumption}
% \subsection{Safe Policy via Truncation}
% In this subsection, we define a class of policies that satisfy the safety constraints by construction. Consider a class of $\theta-$parameterized, differentiable stochastic policies $\pi_\theta(\cdot|x_t)$ over a compact parameter space $\Theta$, with $\theta \in \Theta$. 
As we see, adding a Gaussian exploration noise to the $\mu_\theta$ yields an infinite-support stochastic policy $\pi_\theta$, which helps us obtain a class of $\theta$-parameterized safe policies via truncation. 

We define a class of truncated safe policies by restricting the support of $\pi_\theta$ to the safe control set $C(x_t)$, given by
\begin{equation} \label{eq:piC_definition}
    \pi^C_\theta(u | x_t) =
    \begin{cases}
        \frac{\pi_\theta(u | x_t)}{\pi_\theta (C(x_t) | x_t)} & u \in C(x_t) \\
        0 & u \notin C(x_t),
    \end{cases}
\end{equation}
where $\pi_\theta(C(x_t) | x_t) = \int_{C(x_t)} \pi_\theta(u | x_t) du$. Note that \eqref{eq:piC_definition} constitutes a safe-by-construction policy class, and optimizing over \eqref{eq:piC_definition} automatically guarantees $u_t \in C(x_t)$. 
% \begin{lemma}\label{lem:piC_diff}
%     Suppose that Assumption \ref{asum:mu_diff} holds. Then, the class of truncated safe policies $\pi_\theta^C(\cdot|x)$ are $\theta-$differentiable, for every $x_t$ and $u_t \in C(x_t)$, and it is given by $\nabla_\theta \pi_\theta^C(u_t|x_t) =$
%     \begin{equation}
%          \pi_\theta(u_t|x_t) \left[(\frac{\partial\mu_\theta(x_t)}{\partial \theta})^\top (\Sigma_t^{-1})^\top [u_t-\mu_\theta(x_t)]\right].
%     \end{equation}
% \end{lemma}
% \begin{remark}\label{rem:deter-stoch}
%     For control tasks, a deterministic policy, denoted by $\mu_\theta$ is often a preference, such as PID control. However, in model-free setting where the dynamics is unknown, $\pi_\theta$ can be learned using exploration while interacting with the environment in an RL framework. In such cases, we can add a zero-mean white noise $\epsilon_t \sim \mc{N}(\textbf{0}_{m\times m}, \Sigma^2_t)$ to $\mu_\theta$ to obtain
%     \begin{equation}
%         \pi_\theta(\cdot|x_t) \sim \mc{N}(\mu_\theta,\Sigma^2_t),
%     \end{equation}
%     where $\Sigma^2_t$ is the covariance matrix. We refer to $\epsilon_t$ as the \textbf{exploration noise}.
% \end{remark}
%Now, we define a model-free RL formulation over this class of safe policies.
% \subsection{Model-free Safe RL Objective}
We define the following infinite horizon discounted objective over the class of safe policies $\pi^C_{\theta}$, with $u_t \sim \pi^C_{\theta}(\cdot|x_t)$, as
\begin{equation} \label{eq:obj_fn}
    J_\theta^C = \mathbb{E}_{x_0,\pi_{\theta}^C} \left[ \sum_{t=0}^{\infty} \gamma^t c(x_t, u_t)  \right].
\end{equation}
Then, \eqref{eq:obj_fn} is an unconstrained problem where we minimize the RL objective by searching over the class of safe-by-construction policies,  thus guaranteeing a safe exploration  during learning. 

\subsection{Well-Definedness of the Safe RL Objective}
Our goal is to develop a policy gradient algorithm that solves \eqref{eq:obj_fn}.
%, without any information on system dynamics $\mc{M}(x,u)$. 
%
To use any policy gradient algorithm, we first need to establish that the objective in \eqref{eq:obj_fn} is well-defined under the truncated safe policy class $\pi_\theta^C$. We begin by discussing reachability of the safe set $\mc{S}$ with control inputs that satisfy the forward invariance condition, i.e. $u \in C(x), x \in \mc{S}$.
%Based on the above definition and forward invariance fact$\mathcal{S}$, we move to identify specific conditions under which $\pi_{\theta}^C-$induced Markov chain is $\mu-$irreducible on $\mathcal{S}$. 
Suppose that the one-step reachable subset $W(x) \subset \mathcal{S}$ under the dynamics \eqref{eq:dynamics}, for  $x \in \mc{S}$, is given by
\begin{equation}
    W(x) = \{z \in \mathcal{S} | z= f(x) + g(x) u , u \in C(x)\}.
\end{equation}
That is, $W(x)$ contains all the reachable states in one time step when starting from a state $x$ in the set $\mc{S}$. Furthermore, for any subset $\mc{A} \subset \mc{S}$, we define its reachable subset as $R(\mc{A}) = \cup_{x \in \mc{A}} W(x)$, which implies that the reachable set of the subset $\mc{A}$ is fully covered by the reachable sets of $x \in \mc{A}$. Finally, we define the set of safe control inputs that drive any $x \in \mc{S}$ into the subset $\mc{A} \subset \mc{S}$ as
\begin{equation}
    \mc{M}^{-1}_x (\mathcal{A}) := \{ u \in C(x) \ | \ f(x) + g(x) u \in \mathcal{A} \}.
\end{equation}
Based on the above definitions, we now define the measure of the reachable set and its pushforward  on the safe set under the dynamics \eqref{eq:dynamics}. 
% consider $\ell$ as 
% We begin with the definition of the pushforward measure. 
%\begin{definition}
{
Suppose $X_1 = \mc{M}^{-1}_x (\mathcal{S})$
%Suppose $X_1 = \cup_{x \in \mc{S}} C(x)$
and $X_2 =\mc{S}$, and that $X_{1\sigma}$, $X_{2\sigma}$ represent their corresponding $\sigma$-algebras, respectively. Furthermore, suppose that $\ell: X_{1\sigma} \rightarrow [0, +\infty]$ is the Lebesgue measure on $X_1$ and define the map $\mc{T}:=T^x(u)=f(x)+g(x)u$. With a slight abuse of notation, we suppress the dependence of $\mc{T}$ on $x$ in what follows.}
    Then, with measurable spaces $(X_1, X_{1\sigma})$ and $(X_2, X_{2\sigma})$, the map $\mc{T} \colon X_1 \to X_2$, and the measure
    $\ell \colon X_{1\sigma} \to [0, +\infty]$
    , the \textbf{pushforward} of $\ell$ by $\mc{T}$ is the measure $\mc{T}_{\#}(\ell) \colon X_{2\sigma} \to [0, +\infty]$ defined as:
    \begin{equation}
        (\mc{T}_{\#}\ell)(G) = \ell\left(\mc{T}^{-1}(G)\right) \quad \text{for all } G \in X_{2\sigma}
    \end{equation}
 % \[
 % (\mc{T}_*\ell)(G) = \ell\left(\mc{T}^{-1}(G)\right) \quad \text{for all } B \in \Sigma_2
 % \]
 where $\mc{T}^{-1}(G) = \{u \in X_1 \mid \mc{T}(u) \in G\}$ is the pre-image of $G$ under $\mc{T}$.
% \end{definition}
% \chck{Suppose $\ell$ denotes the Lebesgue measure on the union of safe control sets $\cup_{x \in \mc{S}} C(x)$. Furthermore, for any given $x \in \mc{X}$, suppose the dynamics is written as a map $T^x(u) = f(x) + g(x) u$ that maps a given $u \in \mc{U}$ to  $x' \in \mc{X}$ such that $x'=f(x)+g(x)u$. Then, for any $\mc{G} \subset \cup_{x \in \mc{S}} C(x)$, we define $\mu:=T_{*}^x(\ell)(\mc{G})$ as the pushforward measure \cite{folland1999real}. Then, we define $\mu$-irreducibility as,}
Then, we define $\mc{T}_{\#}\ell$-irreducibility as follows.
% Suppose $\mu$ denote the Lebesgue measure defined over state-space $\mc{X}$, then we define $\mu$-irreducibility as,
%
\begin{definition}
A $\pi^C_{\theta}$-induced Markov chain $\{x_k\}_{k \in \mathbb{N}}$ is $\mc{T}_{\#}\ell$-irreducible on $\mathcal{S}$ if, for any $\mc{T}_{\#} \ell$-measurable $\mathcal{B} \subset \mathcal{S}$, 
\begin{equation}
    \mc{T}_{\#} \ell(\mathcal{B}) > 0 \implies \sum_{k \in \mathbb{N}} \mathbb{P}(x_k \in \mathcal{B} \ | \ x_0 = x) > 0, \forall x \in \mathcal{S}.
\end{equation}
\end{definition}
This definition connects system reachability and Markov chain irreducibility, implying that a Markov chain is irreducible on $\mathcal{S}$  if every measurable subset of $\mathcal{S}$ with positive volume  is reachable from any initial state $x_0 \in \mathcal{S}$ with positive probability. In fact, any Markov chain $\{ x_k \}_{k \in \mathbb{N}}$ induced by policy $\pi_\theta^C$ will belong to the safe set $\mathcal{S}$, following the fact that every input $u_k \sim \pi_\theta^C(\cdot|x_k)$, with $x_k \in \mathcal{S}, k\geq 0$, guarantees that the state evolution $x_{k+1} \in \mathcal{S}$.

We now make some assumptions on system reachability that are important to ensure that the safe set is reachable under the dynamics \eqref{eq:dynamics}, and that we have well-defined measures on the reachable subsets.
\begin{assumption} \label{asum:positive_volume_reachability}
    % For each $x \in \mathcal{S}$, $\mc{T}_{\#} \ell(W(x)) > 0$, and, for every measurable set $\mathcal{B} \subset \mathcal{S}$, there exists $n \in \mathbb{N}$ such that $\mathcal{B}$ is reachable in $n$ steps from $x$.
    {For each $x \in \mathcal{S}$, $\mc{T}_{\#} \ell(W(x)) > 0$. Moreover, for every measurable set $\mathcal{B} \subset \mathcal{S}$ and every $x \in \mathcal{S}$, there exists a finite integer $s(x,\mathcal{B}) \in \mathbb{N}$ such that $\mathcal{B}$ is reachable from $x$ in $s$ steps.}
\end{assumption}
This assumption implies that (i) the reachable set $W(x)$ with $x \in \mc{S}$ has strictly positive volume; (ii) any safe subset  $\mc{B} \subset \mathcal{S}$ is reachable in at most $s$ steps from $x \in \mc{S}$. Furthermore, we make the following assumption on measure of reachable subsets and their corresponding set of control values.
\begin{assumption} \label{asum:volume_preservation}
    {For any $x \in \mathcal{S}$ and any $\mc{T}_{\#} \ell$-measurable set $\mathcal{A} \subset W(x)$, $\mc{T}_{\#} \ell(\mathcal{A}) > 0$ iff $\ell(\mc{M}^{-1}_x(\mathcal{A})) > 0$.}
        % For any $x \in \mathcal{S}$ and any $\mu$-measurable set $\mathcal{A} \subset W(x)$, $\mc{T}_{\#} \ell(\mathcal{A}) > 0$ iff $\mc{T}_{\#} \ell(\mc{M}^{-1}_x(\mathcal{A})) > 0$.
\end{assumption}
This assumption ensures 
that the dynamics (1) maps positive measure subsets of control inputs to positive measure subsets of the state space and vice versa. This assumption enables application of Lebesgue-Radon-Nikodym Theorem \cite[\S3.2]{folland1999real} in our irreducibility results. %Assumption \ref{asum:positive_volume_reachability} is strongly related to reachability and controllability of the underlying nonlinear dynamical systems \eqref{eq:dynamics}.

Given that we are searching for the optimal safe policy over the class of truncated safe policies $\pi^C_\theta$, we make the following assumption on this policy class that enables sampling safe inputs.
\begin{assumption} \label{asum:positive_probability}
{
    For any subset $\mathcal{F} \subset C(x), x \in \mathcal{S}$ with positive measure $\ell(\mathcal{F}) > 0$, the policy $\pi^C_{\theta}(\cdot | x), \theta \in \Theta$ assigns positive probability to $\mathcal{F}$, that is, $\int_\mathcal{F} \pi^C_{\theta}(a | x) da > 0$.}
    % For any subset $\mathcal{F} \subset C(x), x \in \mathcal{S}$ with positive measure, i.e. $\mc{T}_{\#} \ell(\mathcal{F}) > 0$, the policy $\pi^C_{\theta}(\cdot | x), \theta \in \Theta$ assigns positive probability to $\mathcal{F}$, i.e. $\int_\mathcal{F} \pi^C_{\theta}(a | x) da > 0$.
\end{assumption}
This is a mild support condition, common in policy optimization, and ensures that the truncated policy assigns a positive probability to every positive-measure subset of the safe control set.
% Such an assumption is common in RL literature, ensuring a safe control input is sampled with a strictly safe control set has a positive probability.
%
%

%The above assumptions on the dynamics and truncated policy class  enable us to provide the following result that 
{Under the preceding assumptions, irreducibility on the forward-invariant safe set ensures that the expectation defining \eqref{eq:obj_fn}, and hence the objective, are well-defined under the truncated policy class.} 
% We begin by showing that the induced MDP, obtained by the safe policies \eqref{eq:piC_definition} applied to the deterministic dynamics \eqref{eq:dynamics}, is 
% $\mu$-irreducible on the safe set \eqref{eq:safe_set}, under suitable assumptions. 
\begin{theorem} \label{thm:irreducibility}
    Suppose that Assumptions \ref{asum:positive_volume_reachability}, \ref{asum:volume_preservation} and \ref{asum:positive_probability} hold. 
    Further, for a given $\theta \in \Theta$, suppose $\{ x_n \}$ is a Markov chain induced by the policy $\pi_\theta^C$ with dynamics \eqref{eq:dynamics}. Then $\{ x_n \}$ is $\mc{T}_{\#} \ell$-irreducible on the safe set $\mathcal{S}$.
    % Then, for a given $\theta \in \Theta$ and any subset $\mathcal{A} \subset \mathcal{S}$ such that $\mc{T}_{\#} \ell(\mathcal{A}) > 0$, the Markov chain induced by $\pi^C_{\theta}$ on $\mathcal{S}$ enters $\mathcal{A}$ with positive probability.
\end{theorem}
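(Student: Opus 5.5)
The plan is to show directly that for every $x \in \mathcal{S}$ and every $\mc{T}_{\#}\ell$-measurable $\mathcal{B} \subset \mathcal{S}$ with $\mc{T}_{\#}\ell(\mathcal{B})>0$, some $k$ satisfies $\mathbb{P}(x_k \in \mathcal{B} \mid x_0 = x) > 0$. The argument has two layers. The first is a one-step lemma. The second is an induction along the finite reachability horizon $s(x,\mathcal{B})$ given by Assumption \ref{asum:positive_volume_reachability}.

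\emph{One-step lemma.} Fix $y \in \mathcal{S}$ and a measurable $\mathcal{A} \subset W(y)$ with $\mc{T}_{\#}\ell(\mathcal{A})>0$. Since the dynamics \eqref{eq:dynamics} are deterministic given $u$,
\[
\mathbb{P}(x_{k+1} \in \mathcal{A} \mid x_k = y) = \int_{\mc{M}^{-1}_y(\mathcal{A})} \pi^C_\theta(u \mid y)\, du .
\]
By Assumption \ref{asum:volume_preservation}, $\ell(\mc{M}^{-1}_y(\mathcal{A}))>0$. Since $\mc{M}^{-1}_y(\mathcal{A}) \subset C(y)$, Assumption \ref{asum:positive_probability} then makes the integral strictly positive. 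Measurability of $\mc{M}^{-1}_y(\mathcal{A})$ follows because it is the preimage under the continuous affine map $\mc{T}$, intersected with $C(y)$. The set $C(y)$ is closed because $h$, $f$ and $g$ are continuous.

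\emph{Multi-step argument.} Let $s = s(x,\mathcal{B})$. I will read "reachable in $s$ steps" in the measure-theoretic sense that matches the definitions. That is, there are sets $\mathcal{A}_0 = \{x\}, \mathcal{A}_1, \dots, \mathcal{A}_s \subset \mathcal{S}$ with $\mathcal{A}_s \subset \mathcal{B}$, and for every $y \in \mathcal{A}_{j}$ the slice $\mathcal{A}_{j+1} \cap W(y)$ has positive $\mc{T}_{\#}\ell$-measure. One could take $\mathcal{A}_{j+1}$ as a subset of $R(\mathcal{A}_j)$, using $\mc{T}_{\#}\ell(W(y))>0$ from Assumption \ref{asum:positive_volume_reachability}. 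I then prove by induction on $j$ that $\mathbb{P}(x_j \in \mathcal{A}_j \mid x_0 = x)>0$. The inductive step uses the Markov property and the Chapman--Kolmogorov identity:
\[
\mathbb{P}(x_{j+1} \in \mathcal{A}_{j+1}\mid x_0=x) \ge \int_{\mathcal{A}_j} \mathbb{P}(x_{j+1} \in \mathcal{A}_{j+1}\cap W(y) \mid x_j=y)\, \mathbb{P}(x_j \in dy \mid x_0=x).
\]
The integrand is strictly positive on $\mathcal{A}_j$ by the one-step lemma. The measure $\mathbb{P}(x_j\in\cdot\mid x_0=x)$ gives $\mathcal{A}_j$ positive mass by the induction hypothesis. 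Hence the integral is positive. At $j=s$ this yields $\mathbb{P}(x_s \in \mathcal{B}\mid x_0=x)>0$, so the sum over $k$ is positive. Forward invariance ensures the chain never leaves $\mathcal{S}$, because every $u \sim \pi^C_\theta(\cdot\mid y)$ lies in $C(y)$. This is why all of the conditional kernels above are defined on $\mathcal{S}$.

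\emph{Expected obstacle.} The main difficulty is measure-theoretic bookkeeping. The pushforward $\mc{T}_{\#}\ell$ is defined relative to a fixed base point $x$, yet the chain moves through different points $y$. The intermediate sets from Assumption \ref{asum:positive_volume_reachability} must therefore be chosen so that positivity transfers between the measures attached to different base points. I would handle this in two moves. First, show that the integrand $y \mapsto \mathbb{P}(x_{j+1}\in\mathcal{A}_{j+1}\mid x_j=y)$ is measurable, using joint measurability of $(y,u)\mapsto \pi^C_\theta(u\mid y)\mathbf{1}\{\mc{T}^y(u)\in\mathcal{A}_{j+1}\}$ together with Tonelli. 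Second, use the Radon--Nikodym density of $\mc{T}_{\#}\ell$ with respect to Lebesgue measure on $\mathcal{S}$, which Assumption \ref{asum:volume_preservation} is meant to enable, so that positivity of the various pushforwards is equivalent to positive Lebesgue volume. That equivalence would let the argument move consistently across base points.
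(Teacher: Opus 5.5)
This is essentially the paper's argument. Both proofs take a finite chain of intermediate sets from $x$ into the target (the paper's $\varepsilon$-balls, your $\mathcal{A}_j$), get one-step positivity from Assumptions \ref{asum:volume_preservation} and \ref{asum:positive_probability}, and propagate it along the chain. Your Chapman--Kolmogorov induction is a cleaner form of the paper's iterated Radon--Nikodym integrals. Your uniform slice condition makes explicit the property that the paper's positivity claim tacitly needs and that its covering condition $B_{\varepsilon}(y_{k+1}) \subset R(B_{\varepsilon}(y_k))$ does not by itself supply. Like the paper, you posit this chain rather than derive it from the literal Assumption \ref{asum:positive_volume_reachability}.

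Do drop the second move of your obstacle paragraph. When $m<n$, as in the double-integrator quadrotor, the pushforward $\mc{T}^y_{\#}\ell$ lives on the Lebesgue-null affine slice $f(y)+g(y)\mathbb{R}^m$, so it has no density with respect to Lebesgue measure on $\mathcal{S}$. Your induction never needs such a density, because its slice condition is already stated relative to each base point $y$.
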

This irreducibility enables us to conclude the following well-definedness result of the safe RL problem.
\begin{corollary}\label{corol:well-defined}
    The objective in \eqref{eq:obj_fn} is well-defined.
\end{corollary}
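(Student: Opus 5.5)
The corollary asks us to show that, for each fixed $\theta\in\Theta$, the expectation in \eqref{eq:obj_fn} is a legitimate (finite) quantity. Concretely, this requires three things: a well-defined probability law on trajectories $\{(x_t,u_t)\}_{t\ge 0}$ generated by $x_0\sim\mathcal{D}$, $u_t\sim\pi^C_\theta(\cdot|x_t)$ and \eqref{eq:dynamics}; that the series $\sum_t\gamma^t c(x_t,u_t)$ is a measurable random variable; and that its expectation is finite. I will build these three pieces in order, using Theorem \ref{thm:irreducibility} to control the state process on $\mathcal{S}$.

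\textbf{Step 1: a well-defined transition kernel.} For each $x\in\mathcal{S}$, $\pi^C_\theta(\cdot|x)$ must be a probability density. Assumption \ref{asum:positive_volume_reachability} gives $\mathcal{T}_{\#}\ell(W(x))>0$, and Assumption \ref{asum:volume_preservation} then gives $\ell(C(x))\ge \ell(\mathcal{M}^{-1}_x(W(x)))>0$. Since the Gaussian density $\pi_\theta(\cdot|x)$ is strictly positive, $\pi_\theta(C(x)|x)>0$, so the normalization in \eqref{eq:piC_definition} is nonzero and $\pi^C_\theta(\cdot|x)$ integrates to one. Measurability of $(x,u)\mapsto \pi^C_\theta(u|x)$ follows from continuity of $\mu_\theta$ (Assumption \ref{asum:mu_diff}), continuity of $h,f,g$ (so $\{(x,u):u\in C(x)\}$ is closed), and Tonelli's theorem for the normalizer. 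The induced state kernel is then
\[
P_\theta(x,\mathcal{A})=\int_{\mathcal{M}^{-1}_x(\mathcal{A})}\pi^C_\theta(u|x)\,du .
\]
By the CBF condition \eqref{eq:orig-cbf}, it maps $\mathcal{S}$ into $\mathcal{S}$, so $\mathcal{S}$ is forward invariant.

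\textbf{Step 2: a trajectory measure, and where Theorem \ref{thm:irreducibility} enters.} Assuming $\mathcal{D}$ is supported on $\mathcal{S}$, the Ionescu--Tulcea theorem yields a unique probability measure on $(\mathcal{S}\times\mathcal{U})^{\mathbb{N}}$ consistent with $\mathcal{D}$, $\pi^C_\theta$ and \eqref{eq:dynamics}. Theorem \ref{thm:irreducibility} says the chain is $\mathcal{T}_{\#}\ell$-irreducible on $\mathcal{S}$. This is the sense in which the expectation is ``over the safe set'': trajectory probabilities are governed by a single irreducible kernel on $\mathcal{S}$, with no absorbing pieces outside $\mathcal{S}$ and no dependence on the unsafe region. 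Under Assumption \ref{asum:volume_preservation}, the densities involved are with respect to measures that are mutually absolutely continuous, so Radon--Nikodym derivatives exist and the change of variables $u\mapsto f(x)+g(x)u$ is legitimate.

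\textbf{Step 3: finiteness.} This is the main obstacle. The paper has not yet stated any bound on $c$, so finiteness is not automatic. I would assume $c$ is measurable and bounded on $\{(x,u):x\in\mathcal{S},u\in C(x)\}$; for instance, $c$ continuous with this set compact suffices. I would also take $\gamma<1$, since the paper allows $\gamma\in[0,1]$ but $\gamma=1$ need not converge. Under these assumptions, the partial sums of $\sum_t\gamma^t c(x_t,u_t)$ are measurable and dominated by $\|c\|_\infty/(1-\gamma)$. Dominated convergence then gives measurability of the limit and $|J^C_\theta|\le \|c\|_\infty/(1-\gamma)$.

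If $c$ is only nonnegative and possibly unbounded, Tonelli's theorem still makes \eqref{eq:obj_fn} well-defined in $[0,\infty]$. In that case I would try to obtain finiteness from irreducibility together with a Foster--Lyapunov drift bound on $\mathcal{S}$; the paper does not provide such a bound, so I would state boundedness as the operative hypothesis.
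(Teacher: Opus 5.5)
Your proof is correct under the hypotheses you flag, but it follows a different route from the paper. The paper's proof is one step. It invokes Theorem \ref{thm:irreducibility} to say the $\pi_\theta^C$-induced chain is irreducible, asserts that this gives a unique occupancy/probability measure, and concludes that $\mathbb{E}_{\pi_\theta^C}$ is well-defined. It does not address the normalizer $\pi_\theta(C(x)|x)$, measurability of the kernel, or finiteness of the series.

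You instead build the trajectory law directly:
\begin{itemize}
\item $\pi_\theta(C(x)|x)>0$ from Assumptions \ref{asum:positive_volume_reachability}--\ref{asum:volume_preservation};
\item a measurable state kernel on $\mathcal{S}$;
\item Ionescu--Tulcea for existence and uniqueness of the trajectory law;
\item dominated convergence for finiteness under bounded $c$ and $\gamma<1$, with Tonelli giving a value in $[0,\infty]$ when $c\ge 0$.
\end{itemize}

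In your route, irreducibility does no logical work. Ionescu--Tulcea needs only an initial law and measurable kernels, which may even be time-varying or history-dependent, as with $\Sigma_t$ and the PI controller in Section \ref{sec:exp}. So your Step 2 remarks about Theorem \ref{thm:irreducibility} are commentary rather than a needed step, and that is fine.

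The paper's route ties the corollary to Theorem \ref{thm:irreducibility} and points toward the state distribution used in the policy gradient expression. But irreducibility alone guarantees neither an invariant probability measure (positive Harris recurrence would be needed) nor finiteness of the sum. In the discounted case, the discounted occupancy measure exists anyway once the trajectory law exists. Your route is self-contained and makes explicit the finiteness hypotheses the paper leaves implicit.

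Two small corrections:
\begin{itemize}
\item Assumption \ref{asum:mu_diff} gives $\theta$-differentiability, not continuity in $x$. Joint measurability of $(x,u)\mapsto\pi_\theta^C(u|x)$ therefore needs measurability (for example, continuity) of $x\mapsto\mu_\theta(x)$ as a separate mild assumption.
\item The Radon--Nikodym remark is unnecessary. $P_\theta(x,\cdot)$ is simply the pushforward of $\pi_\theta^C(\cdot|x)$ under $u\mapsto f(x)+g(x)u$.
\end{itemize}
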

 Now that we have established that the objective \eqref{eq:obj_fn} is well defined, we can use its gradient with respect to $\theta$, that is, the policy gradient, to design a descent algorithm and find the optimal safe policy. 
 %
 %See \cite[\S2.3]{konda2002thesis} for details on irreducibility in this setting.
% Now that we are assured that the objective function \eqref{eqn:discounted_obj_fn} is well-defined, we are justified in attempting to perform gradient ascent on it. In order to accomplish this, however, we need access to gradient estimates. This is the subject of the next section.
%
\subsection{Policy Gradient (PG) over Safe Policies}
We now compute the score functions for the truncated safe-by-construction policy class and build a safe policy gradient algorithm to solve the optimization problem \eqref{eq:obj_fn}. Using standard model-free policy gradient theorems \cite{konda2002thesis,sutton1999policy}, we write the following  gradient expressions:
\begin{equation}
\nabla_\theta J_\theta^C = s_\gamma \Ex_{\pi_\theta^C} \left[Q^{\pi_\theta^C} \nabla_\theta \log \pi_\theta^C(u|x) \right],
\end{equation}
where
$s_\gamma   = \begin{cases} \frac{1}{1-\gamma}  &, \gamma \in [0,1),\\  1 &, \gamma =1 \end{cases}$, with $\gamma \in [0,1)$ and $\gamma=1$ corresponding to the discounted and undiscounted cases respectively. As $\pi_\theta^C$ is obtained via truncation of the unsafe policy $\pi_\theta$ which has a Gaussian structure $\pi_\theta(\cdot|x_t) \sim \mc{N}(\mu_{\theta}(x_t),\Sigma_t)$ from \eqref{eq:unsafe_pi},  we can write the following.  
\begin{lemma}[Score function of the truncated safe policy]
\label{lem:score}
Suppose that Assumption \ref{asum:mu_diff} holds. Then, the score function of the truncated policy $\pi_\theta^C$ is given by
    \begin{equation}\label{eq:score}
        \nabla \log \pi^C_{\theta}(u_t | x_t)  =  (\frac{\partial\mu_\theta(x_t)}{\partial \theta})^\top (\Sigma_t^{-1})^\top [u_t-m_C(x_t)],
    \end{equation}
    where
    \begin{equation}\label{eq:m_H}
        m_C(x_t) = \frac{\int_{C(x_t)} u \ \pi_{\theta}(u | x_t) du}{\pi_{\theta}(C(x_t) | x)}.
    \end{equation}
\end{lemma}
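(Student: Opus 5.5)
Fix $x_t \in \mathcal{S}$ and $u_t \in C(x_t)$. On $C(x_t)$, the definition \eqref{eq:piC_definition} gives $\pi^C_\theta(u_t|x_t) = \pi_\theta(u_t|x_t)/\pi_\theta(C(x_t)|x_t)$. The plan is to take logarithms and differentiate term by term:
\begin{equation*}
\nabla_\theta \log \pi^C_\theta(u_t|x_t) = \nabla_\theta \log \pi_\theta(u_t|x_t) - \nabla_\theta \log \pi_\theta(C(x_t)|x_t).
\end{equation*}
This is legitimate because $\pi_\theta(u_t|x_t)>0$ for the Gaussian \eqref{eq:unsafe_pi}. We also need $\pi_\theta(C(x_t)|x_t)>0$; this holds whenever $\ell(C(x_t))>0$, which is implicit in the truncation being well-defined. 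Note that the constraint set $C(x_t)$ depends on $x_t$ but not on $\theta$, so the only $\theta$-dependence enters through $\mu_\theta$.

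For the first term, write the Gaussian log-density explicitly:
\begin{equation*}
\log \pi_\theta(u|x_t) = -\tfrac12 (u-\mu_\theta(x_t))^\top \Sigma_t^{-1}(u-\mu_\theta(x_t)) + \text{const}.
\end{equation*}
By Assumption \ref{asum:mu_diff} and the chain rule, its gradient is
\begin{equation*}
\left(\frac{\partial \mu_\theta(x_t)}{\partial\theta}\right)^\top (\Sigma_t^{-1})^\top \bigl(u-\mu_\theta(x_t)\bigr).
\end{equation*}
Here the symmetric part of $\Sigma_t^{-1}$ yields exactly this form, since the covariance is symmetric.

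For the second term, write $\nabla_\theta \log \pi_\theta(C|x_t) = \nabla_\theta \pi_\theta(C|x_t)/\pi_\theta(C|x_t)$. Then interchange gradient and integral:
\begin{equation*}
\nabla_\theta \int_C \pi_\theta(u|x_t)\,du = \int_C \pi_\theta(u|x_t)\,\nabla_\theta\log\pi_\theta(u|x_t)\,du.
\end{equation*}
Substituting the first-term formula and pulling the $u$-independent matrix $\left(\frac{\partial \mu_\theta}{\partial\theta}\right)^\top (\Sigma_t^{-1})^\top$ outside the integral gives
\begin{equation*}
\left(\frac{\partial \mu_\theta}{\partial\theta}\right)^\top (\Sigma_t^{-1})^\top \bigl(m_C(x_t) - \mu_\theta(x_t)\bigr),
\end{equation*}
with $m_C$ as in \eqref{eq:m_H}. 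Subtracting this from the first term cancels $\mu_\theta(x_t)$ and leaves $u_t - m_C(x_t)$, which is \eqref{eq:score}.

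The main obstacle is justifying the differentiation under the integral sign over $C(x_t)$. I would use dominated convergence. Fix $\theta$ and a neighborhood $N$ of it on which $\partial\mu_\theta/\partial\theta$ is bounded; this needs continuity of the Jacobian, or at least local boundedness, which I would add as a mild strengthening of Assumption \ref{asum:mu_diff}. On $N$, $\mu_\theta(x_t)$ stays in a bounded set $K$. The integrand $\pi_\theta(u|x_t)\,\|u-\mu_\theta(x_t)\|$ is then dominated by
\begin{equation*}
c\,(\|u\|+1)\exp\!\bigl(-\lambda_{\min}(\Sigma_t^{-1})\,\mathrm{dist}(u,K)^2/2\bigr),
\end{equation*}
which is integrable over $\mathbb{R}^m \supseteq C(x_t)$. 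The same Gaussian-tail domination shows that $m_C(x_t)$ is finite, so every expression in \eqref{eq:score} is well-defined.
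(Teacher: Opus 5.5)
Your proposal is correct and follows the paper's proof essentially step for step: split $\log \pi^C_\theta$ into $\log\pi_\theta(u_t|x_t) - \log\pi_\theta(C(x_t)|x_t)$, use the Gaussian score for the first term, move $\nabla_\theta$ inside $\int_{C(x_t)}$ for the normalizer, and let $\mu_\theta(x_t)$ cancel to leave $u_t - m_C(x_t)$. The paper only asserts the gradient--integral interchange, so your dominated-convergence argument and your checks that $\pi_\theta(C(x_t)|x_t)>0$ and $\Sigma_t$ is symmetric add rigor; your extra local-boundedness hypothesis on $\partial\mu_\theta/\partial\theta$ is not needed, since you can differentiate $\mu \mapsto \int_{C(x_t)} \mathcal{N}(u;\mu,\Sigma_t)\,du$ in $\mu$ and then apply the chain rule through $\mu_\theta(x_t)$.
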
 

\subsection{Algorithm}
{
We estimate the score function by first estimating the function $m^C(x_t)$ in \eqref{eq:m_H} using Monte Carlo with $M$ samples $\{u_i\}_{i=1}^M$ drawn uniformly from $C(x_t)$. Using the identity
\[
\int_{C(x_t)} \varphi(u)\,du
=
\ell(C(x_t))\,\mathbb E_{u\sim \mathrm{Unif}(C(x_t))}[\varphi(u)],
\]
we obtain
\begin{align*}
\pi_\theta(C(x_t)\mid x_t)
&=
\int_{C(x_t)} \pi_\theta(u\mid x_t)\,du\\
&\approx
\frac{\ell(C(x_t))}{M}\sum_{i=1}^M \pi_\theta(u_i\mid x_t),
\end{align*}
and similarly,
\[
\int_{C(x_t)} u\,\pi_\theta(u\mid x_t)\,du
\approx
\frac{\ell(C(x_t))}{M}\sum_{i=1}^M u_i\,\pi_\theta(u_i\mid x_t).
\]
Therefore,
\begin{equation}\label{eq:est_m_H}
\hat m^C(x_t)
=
\frac{\sum_{i=1}^M u_i\,\pi_\theta(u_i\mid x_t)}
{\sum_{i=1}^M \pi_\theta(u_i\mid x_t)}.
\end{equation}}
We now describe the safe policy gradient method \texttt{SafePG}, following {Flowchart \ref{fig:flowchart}}, to update the policy parameters $\theta$ of the safe-by-construction policy $\pi_\theta^C$ in Algorithm \ref{alg:safePG}.

\begin{algorithm}[htb]
\caption{\texttt{SafePG:} Provably Safe Policy Gradient Reinforcement Learning 
% \chck{edit to costs and grad descent, also several notational inconsistencies} \vks{check now}
}
\label{alg:safePG}
\begin{algorithmic}[1]
    \Input $x_0, \theta_0$, Monte carlo sample size $M$, rollout length $N$, discount factor $\gamma$
    \Output Safe and globally optimal policy
    \Statex % Replaces \BlankLine for a small vertical space
    \State \textbf{Initialization:} Set $t \gets 0$.
    \Repeat
        \State Construct $\pi_{\theta_t}(\cdot|x_t)$ as $\pi_{\theta_t}(\cdot|x_t)\sim \mc{N}(\mu_{\theta_t}(x_t),\Sigma_t)$
        \State Construct $\pi^C_{\theta_t}(\cdot | x_t)$ with $\pi_{\theta_t}(\cdot|x_t)$, $C(x_t)$
        \State  $u_t \sim \pi^C_{\theta_t}(\cdot | x_t)$
        %\State Compute per step cost $r(x_t,u_t)$
        \For{$k = t, \ldots, t+N-1$}
            
            % \State Compute $r(x_k,u_k)$
            % \State $\hat{Q} \gets \hat{Q} + \doubt{\gamma^{t / 2}} r(x_k, u_k)$
            \State $x_{k+1} \sim \mathcal{P}(\cdot | x_k, u_k)$
            % \State Construct $\pi^C_{\theta_t}(\cdot | x_{k+1})$ with $\pi_{\theta_t}(\cdot|x_{k+1})$, $C(x_{k+1})$
            \State $u_{k+1} \sim \pi^C_{\theta_t}(\cdot | x_{k+1})$
            \State Compute per-step cost $c(x_{k},u_{k})$
        \EndFor
        \State Compute $\widehat{Q^{\pi_{\theta_t}^C}} = \sum_{i=t}^{t+N-1} \gamma^{i-t}  c(x_{i}, u_{i})$
        % \State \hat{Q}^{\pi^C_{\theta_k}} ( x_{T_{k+1}}, u_{T_{k+1}} ) = \hat{Q}
        % \State $\hat{Q}^{\pi^C_{\theta_k}}(x_{T_{k+1}}, u_{T_{k+1}}) = \hat{Q}$
        % \State Uniformly sample $\{ u_l \}_{l=1}^M$ from $C(x_t)$, to compute $\widehat{\nabla \log \pi^C_{\theta_t}}(u_t | x_t)$
        \State Compute $\widehat{m_C(x_t)}$ using \eqref{eq:est_m_H} with $M$
        \State Compute $\widehat{\nabla \log \pi^C_{\theta_t}}(u_{t} | x_t)$ using \eqref{eq:score} with $\widehat{m_C(x_t)}$
        % \State Compute $\widehat{\nabla \log \pi^C_{\theta_t}}(u_t | x_t)=\frac{\sum_{i=1}^N u_i \pi_{\theta_t}(u_i|x_t)}{\sum_{j=1}^N \pi_{\theta_t} (u_i|x_t)}$
        \State Compute $\widehat{\nabla_\theta J_{\theta_t}^C} = s_\gamma \widehat{Q^{\pi_{\theta_t}^C}} \widehat{\nabla \log \pi^C_{\theta_t}}(u_{t} | x_{t})$
        % \State $\theta_{k+1} \gets \theta_k + \frac{ \alpha_k }{ 1 - \gamma } \hat{Q}^{\pi^C_{\theta_k}} ( x_{T_{k+1}}, u_{T_{k+1}} ) \widehat{\nabla \log \pi^C_{\theta_k}}(u_{T_{k+1}} | x_{T_{k+1}})$
        \State $\theta_{t+1} \gets \theta_t -  \eta_t \widehat{\nabla_\theta J_{\theta_t}^C}$
        \State $t \gets t + 1$
    \Until{convergence}
\end{algorithmic}
\end{algorithm}

\begin{figure}[t]
\centering
\begin{tikzpicture}[
node distance=2.3mm and 3mm,
box/.style={draw,rounded corners,align=center,font=\scriptsize,minimum height=7mm,text width=.42\columnwidth,inner sep=1.3pt},
arr/.style={-{Latex[length=1.5mm]},thin}]
\node[box] (a) {From $\theta$, construct $\mu_\theta(x)$ and then $\pi_\theta(\cdot|x)$};
\node[box,right=of a] (b) {Construct $\pi_\theta(\cdot|x)$ based on $C(x)$};
\node[box,below=of b] (c) {Collect rollout and then estimate $\widehat{\nabla_\theta J_{\theta_t}^C}$};
\node[box,left=of c] (d) {$\theta \gets \theta -  \eta \widehat{\nabla_\theta J_{\theta}^C}$};
\draw[arr] (a)--(b); \draw[arr] (b)--(c); \draw[arr] (c)--(d); \draw[arr] (d)--(a);
\end{tikzpicture}
\caption{Flowchart for updating the policy parameter $\theta$}
\label{fig:flowchart}
\end{figure}
We will next prove convergence and optimality of this algorithm in the following section. 
\section{Optimality and Convergence}\label{sec:opt}
In this section, we characterize the conditions for convergence and global optimality of the \texttt{SafePG} algorithm. 
% We begin with the weak gra, first introduced in \cite{montenegro2024learning}. 
%
% \begin{assumption}
%     We assume that the deterministic policy $\pi_\theta^C$ is Lipschitz and sufficiently smooth.
% \end{assumption}
%
% \begin{theorem}
%     We show that the induced MDP is log-Lipschitz.
% \end{theorem}
%
% \begin{theorem}
%     We show the induced MDP is smooth.
% \end{theorem}
% Suppose that the unconstrained RL problem under policy $\pi_\theta$ is defined as 
% \begin{equation}
%     J_{\theta}:=\Ex_{\pi_\theta} \sum_{t=0}^\infty  \gamma^t r(x_t,u_t).
% \end{equation}
% $J_{\theta}:=\Ex_{\pi_\theta} \sum_{t=0}^\infty  \gamma^t r(x_t,u_t)$.
%
We begin by stating a structural assumption used to establish convergence and optimality of Algorithm \ref{alg:safePG}, namely a Polyak-Lojasiewicz (PL) condition, also referred to as gradient dominance (GD), on the objective \eqref{eq:obj_fn}.
\begin{assumption}[Weak Gradient Dominance]\label{asum:gd}
% \chck{Why do we care about the global optima of the unconstrained objective? isn't the PL condition (17) on the constrained objective? can we remove the first sentence here?} \vks{This sentence was part of the previous version of gradient dominance. In the current instance, we do not need to mention anything about the unconstrained problem.}
    Suppose that the global optima of 
    % $J_\theta$, i.e. the RL objective without hard constraints, is given by $J_\theta^{*}$. Then, we assume that
    the hard constrained RL objective $J_\theta^C$ from \eqref{eq:obj_fn} is given by $J_\theta^{C^*}$. Then, $J_\theta^C$ satisfies the following weak gradient dominance property, with $\alpha, \beta>0$, as\begin{equation}\label{eq:wgd}
        J_\theta^C - J_\theta^{C^*} \leq \alpha \norm{ \nabla_\theta J_\theta^C } + \beta,
    \end{equation}
    %\norm{J_\theta^C - J_\theta^*}
\end{assumption}
This condition bounds suboptimality in terms of the gradient norm, up to the additive term $\beta$, and is a key requirement in establishing global optimality guarantees for nonconvex problems using first-order methods. A strong version of this condition has been shown to hold for unconstrained LQR \cite{fazel2018global}, while a weak version akin to Assumption \ref{asum:gd} has been established for the proportional-integral-derivative (PID) control parameterization \cite{sharma2025globally}. We next state a standard regularity assumption on the policy gradient.
% This condition provides an upper bound on `how far an objective is from its optima' using the gradient norm, and is a key requirement when analyzing global optimality for non-convex problems with first order optimization methods. A strong version of this condition has been shown to be satisfied for LQR \cite{fazel2018global}, whereas a weaker version holds for optimal proportional-integral-derivative (PID) control \cite{sharma2025globally}. With this, we present another assumption, essential for PG algorithms.
\begin{assumption}[Lipschitzness of the Policy Gradient]\label{asum:lip-grad}
    The policy gradient $\nabla_\theta J_\theta^C (x_0)$ is Lipschitz, satisfying
    \begin{equation}\label{eq:lip-grad}
        \norm{\nabla_\theta J_{\theta'}^C - \nabla_\theta J_{\theta}^C} \leq L \norm{\theta' - \theta}.
    \end{equation}
\end{assumption}
% \chck{Does Malik show global or local lipschitzness?  same with PID?} \vks{both Lipschitzness results are local. Wes insisted that we keep gloabl assumption..similar to AISTATS and standard RL works}
This is a standard regularity assumption in policy-gradient analysis, ensuring that the gradient varies smoothly with the policy parameter. 
% In control settings, this condition has been shown to hold for control parameterizations like LQR \cite{malik2020derivative} and PID  \cite{sharma2025globally}.
% For control frameworks, this assumption  holds for both LQR \cite{malik2020derivative} and Optimal PID \cite{sharma2025globally}.
 A final ingredient needed to analyze the convergence of Algorithm \ref{alg:safePG} is a standard assumption on the stochastic gradient estimator used in the update. Since the policy gradient is estimated from finite-horizon rollouts, the resulting estimator is subject to sampling noise. We therefore assume that this estimator is unbiased and has variance that decreases with the batch size. This assumption is standard in policy-gradient analysis and allows us to control the effect of sampling noise in the convergence analysis.
\begin{assumption}\label{asum:zero-bdd-grad}
    The gradient estimator $\widehat{\nabla_\theta J_{\theta}^C}$, computed with batch size $N>0$ in Algorithm \ref{alg:safePG}, is unbiased and has a bounded variance for some bounded parameter $V$, given as  
    \begin{equation}
        \Ex \left[\widehat{ \nabla_\theta J_{\theta}^C}\right] =  \nabla_\theta J_{\theta}^C,
        \text{Var}\left[ \widehat{\nabla_\theta J_{\theta}^C} \right] \leq \frac{V}{N}.
    \end{equation}
\end{assumption}

% Such gradient dominance holds true for standard LQR problem \cite{fazel2018global} with $\beta=0$, and optimal PID control problem \cite{sharma2025globally}.
% and is a typical assumption found in works such \cite{montenegro2024learning}.

% Such an assumption is typical in many RL works, such as \cite{montenegro2024learning} and \cite{sharma2025globally}.

% \begin{theorem}[Global Optimality]
%     Suppose GD holds with $\beta=0$. Then convergence to global optimality, under hard constraints, holds. 
% \end{theorem}

% \begin{theorem}
%     Suppose that Assumption \ref{asum:gd} holds for $J_\theta^C$. Then convergence to $\beta-$optimality, under hard constraints, holds.
% \end{theorem}
We are now ready to state the main convergence and sample-complexity result for \texttt{SafePG}.
\begin{theorem}\label{thm:disc-samp-comp}
    Suppose Assumptions \ref{asum:gd}, \ref{asum:lip-grad} and \ref{asum:zero-bdd-grad} hold. Further, suppose that the safe RL algorithm \ref{alg:safePG} runs for $T>0$ iterations with updates $\{\theta_i\}_{i=1}^T$, rollout length $N>0$ and step-size $\zeta$ satisfying 
    % \chck{should it be $c_T$ here? can we use a different notation as $c$ is the per-stage cost?} \vks{changed}
    \begin{equation}
        \zeta \leq \min \{ \frac{1}{L}, \frac{\alpha^2}{p_T}, \left(\frac{N \alpha^2}{L V }\right)^{1/3} \},
    \end{equation}
    where $p_T :=\max \{0,J_{\theta_T}^C-J_{\theta}^{C^*}-\beta\}$.
    We then have
    %\vspace{-3mm}
    \begin{align}
        \mathbb{E}& \left[J_{\theta_T}^C \right] - J_{\theta}^{C^*} \leq  \beta  \nonumber\\
        &+ p_T \left(1-\sqrt{\frac{\zeta^3 L V}{4 \alpha^2 N}}\right)^T  + \sqrt{\frac{L V \zeta \alpha^2} N}.
    \end{align}
    Furthermore, to show that the objective $J_{\theta_T}^C$ reaches within an $\epsilon+\beta$-ball, with $\epsilon>0$, of global minima $J_{\theta}^{C^*}$ with step-size $\zeta=\frac{(\epsilon \alpha)^2 N}{4  V}$, i.e.  $\mathbb{E} \left[J_{\theta_T}^C \right] - J_{\theta}^{C^*} \leq \epsilon + \beta$, the samples complexity follows
    %\vspace{-3mm}
    \begin{equation}
        N T \leq \frac{16 \alpha^4 L V}{\epsilon^3} \log\frac{p_T}{\epsilon}.
    \end{equation}
\end{theorem}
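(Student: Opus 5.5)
The argument is a standard stochastic-gradient descent analysis under weak gradient dominance, in the style of the PL-based analyses used for deterministic policy gradients with Gaussian exploration (e.g.\ the weak-GD analysis for PID and \cite{montenegro2024learning}). It has four steps: a one-step descent inequality, a recursion for the excess suboptimality, unrolling that recursion, and a sample-complexity corollary.

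\textbf{Step 1 (descent lemma).} Write $g_t = \nabla_\theta J^C_{\theta_t}$ and $\hat g_t = \widehat{\nabla_\theta J^C_{\theta_t}}$. Assumption \ref{asum:lip-grad} gives the quadratic upper bound $J^C_{\theta_{t+1}} \le J^C_{\theta_t} - \zeta\langle g_t,\hat g_t\rangle + \frac{L\zeta^2}{2}\|\hat g_t\|^2$. Take the conditional expectation given $\theta_t$ and use Assumption \ref{asum:zero-bdd-grad}, which gives $\Ex\hat g_t=g_t$ and $\Ex\|\hat g_t\|^2\le \|g_t\|^2+V/N$. This yields
\[
\Ex[J^C_{\theta_{t+1}}\mid\theta_t] \le J^C_{\theta_t} - \zeta\Big(1-\tfrac{L\zeta}{2}\Big)\|g_t\|^2 + \tfrac{L\zeta^2 V}{2N}.
\]
With $\zeta\le 1/L$ the gradient coefficient is at least $\zeta/2$.

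\textbf{Step 2 (weak gradient dominance).} Set $\delta_t := \max\{0, J^C_{\theta_t}-J^{C^*}_\theta-\beta\}$. By Assumption \ref{asum:gd}, $\|g_t\|\ge \delta_t/\alpha$, so $\|g_t\|^2\ge \delta_t^2/\alpha^2$. This gives the nonlinear recursion
\[
\Ex\,\delta_{t+1}\le \delta_t - \tfrac{\zeta}{2\alpha^2}\delta_t^2 + \tfrac{L\zeta^2V}{2N}.
\]
It must hold after taking $\max\{0,\cdot\}$ and total expectation; for the latter I apply Jensen to the convex map $\delta\mapsto\delta^2$, so that $\Ex\delta_t^2\ge(\Ex\delta_t)^2$.

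\textbf{Step 3 (unrolling; the main obstacle).} The recursion is quadratic rather than linear, so it does not contract geometrically on its own. I will linearize around the fixed point $\delta^\star = \sqrt{L V\zeta\alpha^2/N}$, where the decrease term balances the noise term. Writing $\delta_t^2 \ge 2\delta^\star\delta_t-(\delta^\star)^2$ turns the recursion into an affine one:
\[
\Ex\,\delta_{t+1}\le (1-\rho)\,\delta_t + c, \qquad \rho = \zeta\delta^\star/\alpha^2 \propto \sqrt{\zeta^3 LV/(\alpha^2 N)}.
\]
The step-size bounds are what make this work. The condition $\zeta\le \alpha^2/p_T$ guarantees $\zeta\delta_t/(2\alpha^2)\le 1/2$, so the quadratic term never overshoots and $1-\rho\in(0,1)$. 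The condition $\zeta\le (N\alpha^2/(LV))^{1/3}$ guarantees $\rho\le 1$. Unrolling then gives
\[
\Ex\,\delta_T\le p_T(1-\rho)^T + \frac{c}{\rho}.
\]
I expect $c/\rho$ to equal $\delta^\star$ up to the constant $4$ appearing in the statement. Adding back $\beta$ gives the claimed bound. Matching the exact constants (the $4$ under the root) will need some bookkeeping, and the role of $p_T$ as the initial gap is taken as given in the statement.

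\textbf{Step 4 (sample complexity).} First choose $\zeta$ so that $\delta^\star\le\epsilon/2$; this corresponds to $\zeta$ of order $\epsilon^2 N/(\alpha^2 LV)$, matching the stated choice up to constants. Then require $p_T(1-\rho)^T\le\epsilon/2$. Using $(1-\rho)^T\le e^{-\rho T}$, it suffices that $T\ge \rho^{-1}\log(p_T/\epsilon)$, with $\rho$ of order $\zeta\epsilon/\alpha^2$. Substituting $\zeta$ and multiplying by $N$ gives $NT$ of order $\alpha^4 LV\epsilon^{-3}\log(p_T/\epsilon)$, which is the claimed bound.
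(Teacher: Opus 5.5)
Your Steps 1 and 2 coincide with the paper's proof: the descent lemma from Assumption~\ref{asum:lip-grad}, conditional unbiasedness and the variance bound, $\zeta\le 1/L$ to get the factor $\zeta/2$, and weak gradient dominance. The paper then stops and imports the unrolling of the quadratic recursion, and the sample-complexity count, from the proof of Theorem~6.1 in \cite{montenegro2024learning}. Your Step 3 replaces that citation with a self-contained tangent-line linearization, and it works better than you expect.

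With $a=\zeta/(2\alpha^2)$, $b=L\zeta^2V/(2N)$ and $\delta^\star=\sqrt{b/a}$, you get $r_{t+1}\le(1-\rho)r_t+2b$ with $\rho=2\sqrt{ab}=\sqrt{\zeta^3LV/(\alpha^2N)}$. Moreover $2b/\rho=\delta^\star$ exactly, so there is no constant to chase in the noise floor. The $4$ in the statement appears only in the rate $1-\sqrt{ab}=1-\rho/2$, and since $0\le 1-\rho\le 1-\rho/2$, your bound implies the stated one.

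Your route uses only $\zeta\le 1/L$ and $\zeta\le(N\alpha^2/(LV))^{1/3}$. The latter is exactly $\rho\le 1$, which keeps the affine map monotone so the inequality can be iterated. The condition $\zeta\le\alpha^2/p_T$ is exactly $ap_T\le\tfrac12$, i.e.\ monotonicity of $x\mapsto x-ax^2$ on $[0,p_T]$. That is what a comparison argument on the quadratic recursion itself needs (presumably its role in the cited proof), but in your argument it is superfluous. So drop the sentence attributing $1-\rho\in(0,1)$ to it and asserting $\zeta\delta_t/(2\alpha^2)\le\tfrac12$ for all $t$. The former comes from the cube-root condition. The latter would require $\delta_t\le p_T$ along the noisy iterates, which you have not shown and do not need. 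The citation buys the paper brevity; your argument buys a case-free proof with a faster contraction factor and one hypothesis fewer.

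\textbf{Step 2 needs repair.} You put the clipping $\max\{0,\cdot\}$ inside $\delta_{t+1}$ and assert the recursion survives it; it does not follow. Step 1 bounds $\mathbb{E}[J^C_{\theta_{t+1}}]-J^{C^*}_\theta-\beta$, and since $\mathbb{E}\max\{0,Y\}\ge\max\{0,\mathbb{E}Y\}$, this gives no bound on $\mathbb{E}\delta_{t+1}$. For instance, at $\delta_t=0$ the positive part of the zero-mean term $-\zeta\langle g_t,\hat g_t-g_t\rangle$ can be of order $\zeta\|g_t\|\sqrt{V/N}$. That is first order in $\zeta$ and is not covered by $b=O(\zeta^2)$.

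Track instead the unclipped $r_t:=\mathbb{E}[J^C_{\theta_t}]-J^{C^*}_\theta-\beta$, which may be negative. Assumption~\ref{asum:gd} and Jensen for the convex map $x\mapsto\max\{0,x\}^2$ give $\mathbb{E}\|g_t\|^2\ge\alpha^{-2}\max\{0,r_t\}^2$, hence $r_{t+1}\le r_t-a\max\{0,r_t\}^2+b$. Since $\max\{0,x\}^2\ge 2\delta^\star x-(\delta^\star)^2$ for every real $x$, your linearization goes through verbatim and yields $r_T\le(1-\rho)^T\max\{0,r_0\}+\delta^\star$. This is exactly what the theorem asks for, with $p_T$ read as the initial gap $\max\{0,r_0\}$, as you did.

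\textbf{Step 4 choice of $\zeta$.} The choice giving $\delta^\star=\epsilon/2$ is $\zeta=\epsilon^2N/(4\alpha^2LV)$. It does not match the statement's $\zeta=(\epsilon\alpha)^2N/(4V)$ up to constants: they differ by the factor $\alpha^4L$, so the statement's expression looks like a typo. You should also check that this $\zeta$ satisfies $\zeta\le 1/L$ and $\rho\le 1$, which holds for $\epsilon$ small. With it, $\rho=\epsilon^3N/(8\alpha^4LV)$, and $NT=8\alpha^4LV\epsilon^{-3}\log(2p_T/\epsilon)$ suffices. This is within the stated bound once $p_T\ge 2\epsilon$.
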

Theorem \ref{thm:disc-samp-comp} provides sufficient conditions under which the \texttt{SafePG} algorithm converges to an $(\epsilon+\beta)$-neighborhood of the global constrained optimum, with sample complexity scaling as $\mathcal{O}(\frac{1}{\epsilon^3}\log\frac{1}{\epsilon})$.

\section{Case Study}\label{sec:exp}
We now validate the proposed safe RL framework on a safe quadrotor  environment where the objective is to safely navigate around an obstacle to reach the target. We demonstrate that \texttt{SafePG} learns a safe  policy that enables the quadrotor to reach its goal, in comparison to a projection-based safety-filter approach where the quadrotor remains safe but fails to converge to the goal.

\subsection{Setup}
We consider the safe quadrotor navigation problem in \cite{xu2018safe}, where we denote the positions of the quadrotor, the obstacle and the goal as $r = (r_x, r_y, r_z)$, $r_{obs}=(r^o_{x},r^o_{y},r^o_{z})$ and $r_{goal}$ respectively in the world frame. Under the small-angle approximation, the quadrotor dynamics can be modeled as a double integrator $\ddot{r} = u$, where $u \in \mathbb{R}^3$ is the acceleration in the $x-,y-,z-$axes. The actuator limits are defined by a fixed hyperrectangle $H := \{u \in \mathbb{R}^3 \mid u_{\min} \leq u \leq u_{\max}\}$. To ensure collision-free navigation around an obstacle located at $r_{obs}$, we define the safe set $\mathcal{S} = \{r : h(r) \geq 0\}$, where the CBF is given by:
\begin{equation}
    h(r) = \left(\frac{\Delta r_x}{a}\right)^4 + \left(\frac{\Delta r_y}{b}\right)^4 + \left(\frac{\Delta r_z}{c}\right)^4- r_s,
\end{equation}
where $\Delta r = r - r_{obs}$ is the distance to the obstacle,  parameters $a,b > 0$ represent the shape of the obstacle, and $r_s > 0$ is the safety margin. Because the relative degree \cite{ames2019control,xiao2021high} of $h(r)$ with respect to $u$ is 2, we enforce forward invariance using an Exponential CBF condition \cite{xu2018safe}:
\begin{equation}\label{eq:ecbf-case}
    \ddot{h}(r) + K_2 \dot{h}(r) + K_1 h(r) \geq 0,
\end{equation}
where $K_1, K_2 > 0$ are design parameters. This condition can be rearranged into an affine constraint on the control input, $A_r u \leq b_r$, yielding the state-dependent safe control set $\mathcal{C}(r) = \{u : A_r u \leq b_r\}$. 
% In this paper, we consider navigation in the $x$ and $y$ dimensions, with $K_1=6, K_2=8$, and corresponding actuator constraints $H_2:= \{u \in \mathbb{R}^2 \mid u_{\min} \leq u \leq u_{\max}\}$.
{In this paper, we consider navigation in the $x$ and $y$ dimensions with corresponding actuator constraints $H_2:= \{u \in \mathbb{R}^2 \mid u_{\min} \leq u \leq u_{\max}\}$, and use $K_1=6,K_2=8$, for which $s^2+K_2s+K_1=(s+4-\sqrt{10})(s+4+\sqrt{10})$ is Hurwitz and overdamped.
% Increasing $K_1$ relaxes \eqref{eq:ecbf-case} when $h>0$ but has no direct effect at $h=0$; increasing $K_2$ tightens the admissible controls when $\dot h<0$ (approaching the obstacle) and relaxes them when $\dot h>0$ (receding). The pair therefore penalizes approach without inducing oscillatory barrier dynamics, while retaining a nonempty control box along the reported trajectories.
} We discretize the dynamics and safety condition with discretization period set to $0.1$ seconds. A more detailed experiment setup including all the hyperparameters for the quadrotor dynamics and CBF condition can be found in \cite{sharma2026risk}. %\chck{add discretization?}

\subsection{\texttt{SafePG} Implementation}
% \chck{cite our Github and OJCSYS for the parameters and implementation of the environment?} \vks{added}
We first parameterize the underlying deterministic policy $\mu_\theta$ as a PI controller:
\begin{equation}
    \mu_\theta(x_t) = K_P e_t + K_I \sum_{i=0}^{t-1} e_i,
\end{equation}
where $\theta = \begin{bmatrix}
    K_P' \quad
    K_I'
\end{bmatrix}'$, $e_t = r_{goal} - r_t$ is the distance of the quadrotor from the goal, and the hyperparameters are chosen as $\Sigma_0=100 I_2, M=200, N=200, \gamma = 0.95, \theta_0=\{0.1,0.1\}$.
% \begin{table}[htb]
% \centering\small
% \caption{Case-study and \texttt{SafePG} parameters.}\label{tab:params}
% \begin{tabular}{@{}ll@{}}\hline
% CBF/Discretization period & $(K_1,K_2,\Delta)=(6,8,0.1\,\mathrm{s})$\\
% MC/batch/horizon & $M=$, $B=$, $H=$\\
% Discount/step size & $\gamma=$, $\eta=$\\
% Exploration/seeds & $\Sigma=$, seeds $=$\\ \hline
% \end{tabular}
% \end{table}
For truncation of Gaussian policy $\pi_\theta(\cdot | x_t) \sim \mathcal{N}(\mu_\theta(x_t), \Sigma_t)$, we construct a maximal inner hyperrectangle $H_c(r_t) \subseteq \mathcal{C}(r_t) \cap H_2$, adapted from \cite{repo}, and then truncate the support of $\pi_\theta(\cdot | x_t)$ to this constraint set $H_c(r_t)$ to obtain $\pi_\theta^C(\cdot | x_t)$.
% To construct the truncated PI policy, we truncate the unconstrained policy $\pi_\theta(\cdot | x_t) \sim \mathcal{N}(\mu_\theta(x_t), \Sigma^2)$ onto the safe control set. The strictly safe, truncated policy $\pi^C_\theta(\cdot | x_t)$ is then constructed by truncating the support of $\pi_\theta$ to the safe control set.
We define the per-time step reward as $-\norm{e_t}$ and minimize the episodic cost $\sum_{i=t}^{t+N-1} \norm{e_t}$  to encourage the quadrotor to reach the goal while exploring safely with $\pi_\theta^C(\cdot | x_t)$.

\subsection{Benchmark and Results}
In order to highlight the importance of searching directly over a class of safe policies, we compare \texttt{SafePG} with a projection-based safety filter baseline. For the baseline, we first learn an unconstrained stochastic PI policy $\pi_\theta(\cdot | x_t)\sim \mathcal{N}(\mu_\theta(x_t), \Sigma_t)$ through a policy gradient approach and project it  onto the constraint set $H_c(r(t))$ to obtain a safe control value $u_t^{project}$ by solving a Quadratic Program (QP) at each time step.
% --- FIGURE PLACEHOLDER ---
% \begin{figure}[t]
%     \centering
%     \vspace{4.5cm} % Placeholder space for the trajectory comparison plot
%     % \includegraphics[width=\linewidth]{figures/trajectory_comparison.pdf}
%     \caption{Comparison of quadrotor trajectories. The proposed \texttt{safePG} algorithm (blue) successfully learns to navigate around the unsafe zone to reach the goal. In contrast, the projection-based safety filter baseline (orange) gets trapped near the obstacle boundary and fails to reach the target.}
%     \label{fig:trajectory_comparison}
% \end{figure}
%
% \begin{figure}[htbp]
%     \centering
%     %\vspace{4cm} % Placeholder space for the trajectory plot
%     \includegraphics[width=0.87\linewidth]{safepg_learning_trajectories.png}
%     \caption{Performance of the \texttt{safePG} algorithm where the quadrotor reaches the goal while maintaining safety at all times during exploration.}
%     \label{fig:trajectory}
% \end{figure}
% \begin{figure}[htbp]\label{fig:projection_trajectory}
%     \centering
%     \includegraphics[width=0.9\linewidth]{learning_progression.png}
%     \caption{Performance of the projection-based safety filter where the quadrotor maintains safety but fails to converge to the goal.}
% \label{fig:projection_full_results}
% \end{figure}
%\vspace{-4cm}
\begin{figure}[t]
    \centering
    \includegraphics[width=0.70\linewidth]
        {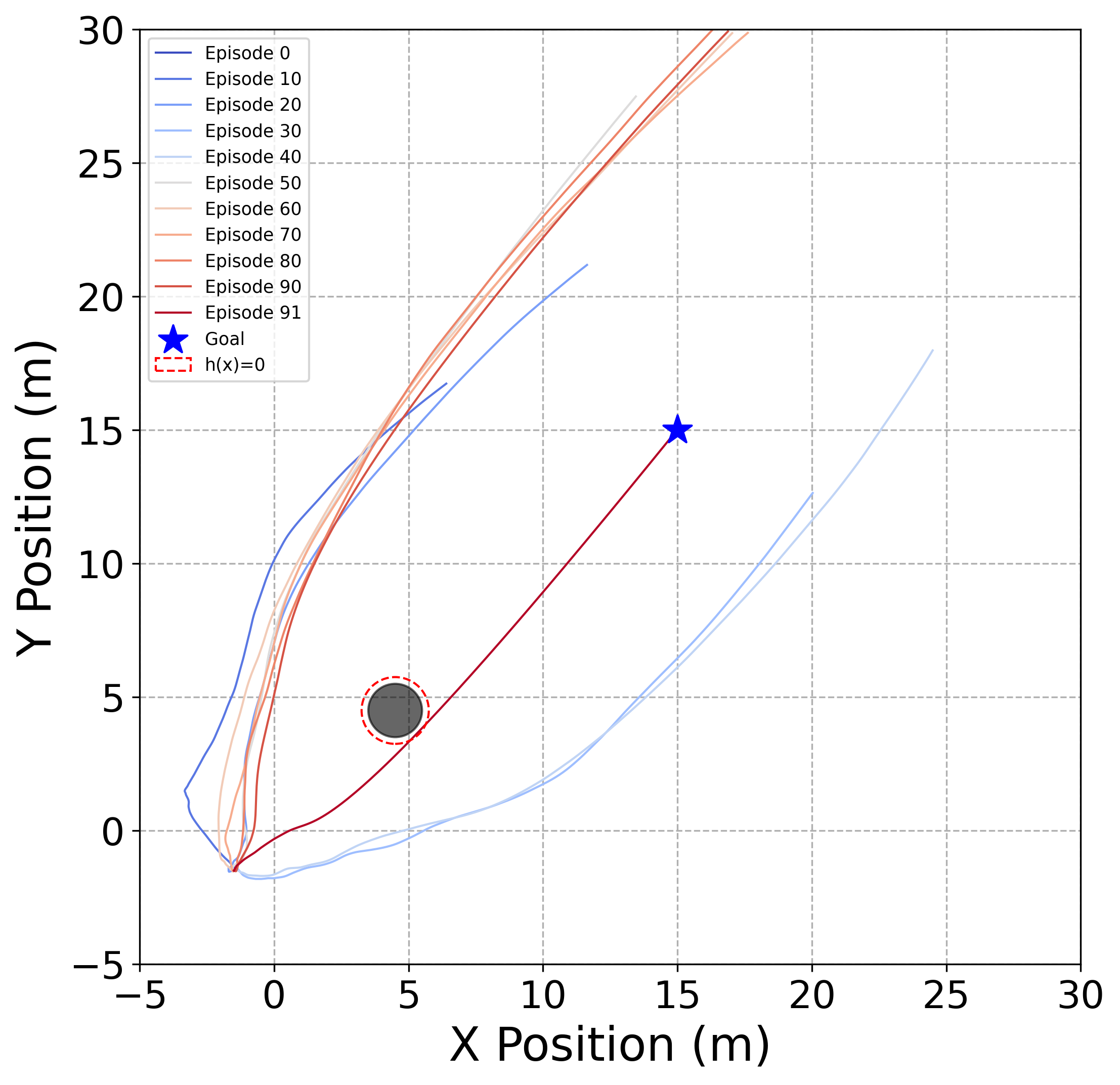}
    \caption{\texttt{SafePG} training trajectories showing
    obstacle avoidance and goal attainment.}
    \label{fig:trajectory}
\end{figure}

\begin{figure}[t]
    \centering
    \includegraphics[width=0.70\linewidth]
        {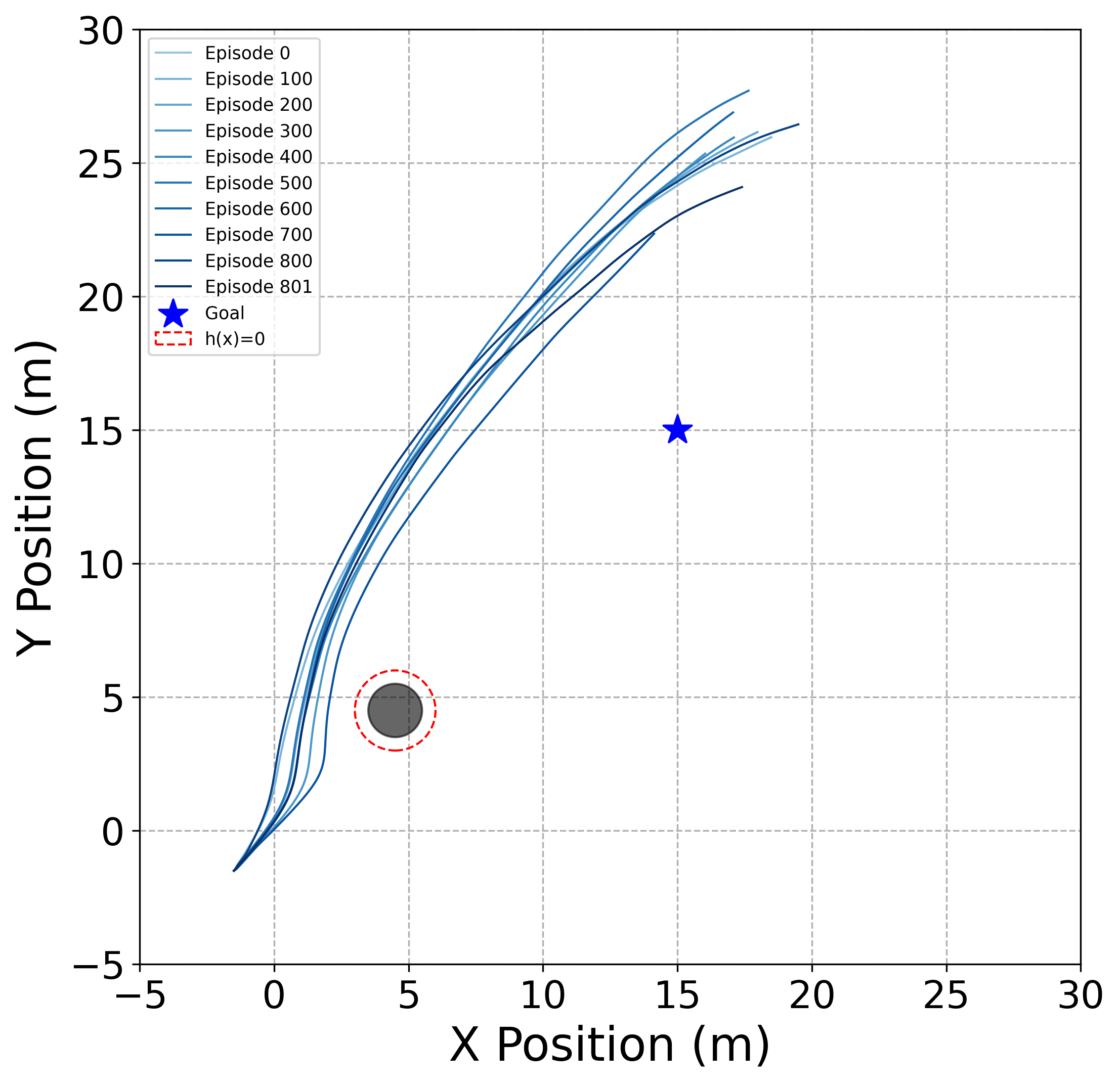}
    \caption{Projection-based training trajectories:
    the goal is not reached in the displayed episodes.}
    \label{fig:projection_full_results}
\end{figure}
As illustrated in Figure \ref{fig:trajectory}, the \texttt{SafePG} agent successfully learns to navigate around the obstacle to reach the goal while strictly maintaining safety{, even when we varied the exploration noise variance $\Sigma_0$ from $10 I_2$ to $100 I_2$.} Conversely, while the projection-based safety filter prevents collisions, the agent fails to reach the goal even after 800 episodes, as shown in Figure \ref{fig:projection_full_results}. 
% \begin{figure}[htbp]
%     \centering
%     %\vspace{4cm} % Placeholder space for the projection trajectory plot
%     \includegraphics[width=\linewidth]{learning_progression.png}
%     \caption{Quadrotor trajectories generated by the projection-based safety filter. While safety is maintained, the agent fails to reach the goal. \chck{add reward curves too?}}
%     \label{fig:projection_trajectory}
% \end{figure}
\section{Conclusion}
We developed a model-free safe RL framework which satisfies hard pointwise safety constraints at all times, and provided conditions for convergence and global optimality of the proposed PG algorithm. We validated this design on a safe quadrotor navigation problem, and showed that quadrotor reaches its target while avoiding an obstacle in its path. Future directions include integrating with data-driven or Gaussian process learning to jointly learn the safe set or safety specifications along with the policy parameters while preserving formal safety and convergence guarantees{, and development of scalable policy gradient framework for high-order or multi-agent systems.}

\balance
\bibliographystyle{IEEEtran}
\bibliography{references} 

\appendix
% \subsection{Additional Calculations}\label{app-subsec:add_detail}
% \textbf{Nothing to write yet}
% \subsection{Proofs}\label{app-subsec:Proofs}
% \begin{proof}[Proof of Lemma \ref{lem:piC_diff}]
% \end{proof}
\begin{proof}[\textbf{Lemma \ref{lem:score}}]
    Recall from the definition of $\pi_{\theta}^C$  in \eqref{eq:piC_definition} that, for $u \in C(x_t)$,
    \begin{align*}
        \log \pi_{\theta}^C(u_t|x_t) &= \log \frac{\pi_{\theta}(u_t|x_t)}{\pi_{\theta}(C(x_t)|x)},\\
        &=\log \pi_{\theta}(u_t|x_t)- \log \pi_{\theta}(C(x_t)|x)
    \end{align*}
    Taking the operator $\nabla_{\theta}$ on both sides, we get
    \begin{align}
        &\nabla_{\theta} \log \pi_{\theta}^C(u_t|x_t) %\nonumber\\
        % &= \nabla_{\theta} \log \pi_{\theta}(u_t|x_t) - \nabla_{\theta} \log \pi_{\theta}(C(x_t)|x),\nonumber\\
        %&
        = \frac{\nabla_{\theta} \pi_{\theta}(u_t|x_t)}{\pi_{\theta}(u_t|x_t)} - \frac{\nabla_{\theta} \pi_{\theta}(C(x_t)|x)}{\pi_{\theta}(C(x_t)|x)} \label{eq:c-grad}.
    \end{align}
       With the policy $\pi_{\theta}$ from \eqref{eq:unsafe_pi}, given by
        $$\pi_{\theta}(u_t | x_t)= (2\pi)^{-\frac{m}{2}} det(\Sigma_t)^{-\frac{1}{2}}e^{-\frac{1}{2}[u_t-\mu_\theta(x_t)]^{\top} \Sigma_t^{-1} [u_t-\mu_\theta(x_t)]},$$
    % \begin{align}\label{eq:pi-density}
    %     &\pi_{\theta}(u_t | x_t) \nonumber\\
    %     &\quad = (2\pi)^{-\frac{m}{2}} det(\Sigma_t)^{-\frac{1}{2}}e^{-\frac{1}{2}[u_t-\mu_\theta(x_t)]^{\top} \Sigma_t^{-1} [u_t-\mu_\theta(x_t)]}.
    % \end{align}
    we can write the  gradient 
    $$\nabla_{\theta} \pi_{\theta}(u_t|x_t) = \pi_{\theta}(u_t|x_t) (\frac{\partial\mu_\theta(x_t)}{\partial \theta})^\top (\Sigma_t^{-1})^\top (u_t - \mu_\theta(x_t)).$$
    % \begin{equation*}
    % \small
    %         \nabla_{\theta} \pi_{\theta}(u_t|x_t) = \pi_{\theta}(u_t|x_t) (\frac{\partial\mu_\theta(x_t)}{\partial \theta})^\top (\Sigma_t^{-1})^\top (u_t - \mu_\theta(x_t)).
    % \end{equation*}
    % From Lemma \ref{lem:piC_diff}, we know that $\nabla_{\theta} \pi_{\theta}(u_t|x_t) = \pi_{\theta}(u_t|x_t) \left[(\frac{\partial\mu_\theta(x_t)}{\partial \theta})^\top (\Sigma_t^{-1})^\top (u_t - \mu_\theta(x_t))\right]$.
    Then, the first term  in \eqref{eq:c-grad} can be written as $$\frac{\nabla_{\theta} \pi_{\theta}(u_t|x_t)}{\pi_{\theta}(u_t|x_t)} = (\frac{\partial\mu_\theta(x_t)}{\partial \theta})^\top (\Sigma_t^{-1})^\top (u_t - \mu_\theta(x_t)).$$
    % \begin{equation}\label{eq:a-grad}
    %     \frac{\nabla_{\theta} \pi_{\theta}(u_t|x_t)}{\pi_{\theta}(u_t|x_t)} = (\frac{\partial\mu_\theta(x_t)}{\partial \theta})^\top (\Sigma_t^{-1})^\top (u_t - \mu_\theta(x_t)).
    % \end{equation}
    For the second term $\frac{\nabla_{\theta} \pi_{\theta}(C(x_t)|x_t)}{\pi_{\theta}(C(x_t)|x_t)}$ in \eqref{eq:c-grad}, we begin by recalling the definition of $\pi_{\theta}(C(x_t)|x_t)$ and applying the gradient operator $\nabla_\theta$ as
    % \begin{equation*}
    %     \pi_{\theta}(C(x_t) | x_t) = \int_{C(x_t)} \pi_{\theta}(u | x_t) du.
    % \end{equation*}
    % Applying the $\nabla_{\theta}$ operator on both sides, we get
    \begin{equation*}
       \nabla_{\theta} \pi_{\theta}(C(x_t) | x_t) = \nabla_{\theta} \int_{C(x_t)} \pi_{\theta}(u | x_t) du.
    \end{equation*}
    Note that since the gradient is with respect to $\theta$, we can move the $\nabla_{\theta}$ operator inside the integral as
    \begin{align*}
       &\nabla_{\theta} \pi_{\theta}(C(x_t) | x_t) \\
       &\quad = \int_{C(x_t)} \nabla_{\theta}  \pi_{\theta}(u | x_t) du,\\
       % &=\int_{C(x_t)} \pi_{\theta}(u|x_t) \left[(\frac{\partial\mu_\theta(x_t)}{\partial \theta})^\top (\Sigma_t^{-1})^\top (u - \mu_\theta(x_t))\right] du\\
       &\quad =\int_{C(x_t)} (\frac{\partial\mu_\theta(x_t)}{\partial \theta})^\top (\Sigma_t^{-1})^\top u \pi_{\theta}(u|x_t) du\\
       &\quad \quad - \int_{C(x_t)} (\frac{\partial\mu_\theta(x_t)}{\partial \theta})^\top (\Sigma_t^{-1})^\top \mu_\theta(x_t) \pi_{\theta}(u|x_t) du\\
       &\quad =(\frac{\partial\mu_\theta(x_t)}{\partial \theta})^\top (\Sigma_t^{-1})^\top\int_{C(x_t)}  u_t \pi_{\theta}(u|x_t) du\\
       &\quad \quad - (\frac{\partial\mu_\theta(x_t)}{\partial \theta})^\top (\Sigma_t^{-1})^\top \mu_\theta(x_t) \int_{C(x_t)}  \pi_{\theta}(u|x_t) du.
    \end{align*}
    Therefore we can write 
    \begin{align*}
    &\frac{\nabla_{\theta} \pi_{\theta}(C(x_t)|x_t)}{\pi_{\theta}(C(x_t)|x_t)} \nonumber\\
           &=(\frac{\partial\mu_\theta(x_t)}{\partial \theta})^\top (\Sigma_t^{-1})^\top \frac{\int_{C(x_t)}  u \pi_{\theta}(u|x_t) du}{\pi_{\theta}(C(x_t)|x)}\\
       &\quad - (\frac{\partial\mu_\theta(x_t)}{\partial \theta})^\top (\Sigma_t^{-1})^\top \mu_\theta(x_t) \frac{\int_{C(x_t)}  \pi_{\theta}(u|x_t) du}{\pi_{\theta}(C(x_t)|x_t)}.
    \end{align*}
    From the definition $\pi_{\theta}(C(x_t)|x_t) = \int_{C(x_t)}  \pi_{\theta}(u|x_t) du$ and $m_C(x_t)$ from \eqref{eq:m_H}, we can finally write 
    \begin{align*}
        \nabla_{\theta} \log \pi_{\theta}^C(u_t|x_t)
        =(\frac{\partial\mu_\theta(x_t)}{\partial \theta})^\top (\Sigma_t^{-1})^\top \left[u_t - m_C(x_t)\right ].
    \end{align*}
    % \begin{align*}
    %     \frac{\nabla_{\theta} \pi_{\theta}(C(x_t)|x)}{\pi_{\theta}(C(x_t)|x)}=
    %     % &(\frac{\partial\mu_\theta(x_t)}{\partial \theta})^\top (\Sigma_t^{-1})^\top m_C(x_t) -\nonumber\\
    %     % &(\frac{\partial\mu_\theta(x_t)}{\partial \theta})^\top (\Sigma_t^{-1})^\top \mu_\theta(x_t),\nonumber\\
    %     % =&
    %     (\frac{\partial\mu_\theta(x_t)}{\partial \theta})^\top (\Sigma_t^{-1})^\top (\mu_\theta(x_t) -  m_C(x_t)).\label{eq:b-grad}
    % \end{align*}
    % Then, we rewrite $\nabla_{\theta} \log \pi_{\theta}^C(u_t|x_t)$ from \eqref{eq:c-grad} as
\end{proof}
\begin{proof}[\textbf{Theorem \ref{thm:disc-samp-comp}}]
% We begin by defining a sequence of $\varepsilon$-balls, each reachable from the previous element of the sequence, that leads from $x_0$ to $\mathcal{F}${, for any given $x_0 \sim \mc{D}$. } We then show that the head of the Markov chain $\{x_k\}_{k \in \mathbb{N}}$ lies inside this sequence with a positive probability. 
For a given $\varepsilon > 0$, suppose the finite sequence $\{ y_0, y_1, \ldots, y_N \} \subset \mathcal{S}$ is such that $B_{\varepsilon}(y_{k+1}) \subset R(B_{\varepsilon}(y_{k}))$, for $k = 1, \ldots, N-1$, and $B_{\varepsilon}(y_N) \cap \mathcal{F} \neq \emptyset$. For a given $\theta$, let $\{x_k\}_{k \in \mathbb{N}}$ be the Markov chain induced on $\mathcal{S}$ by $\pi^C_{\theta}$ such that $x_0 = y_0$. We then show that the finite trajectory $( x_0, x_1, \ldots, x_N )$ from the Markov chain is contained within the set $\{ y_0 \} \times B_{\varepsilon}(y_1) \times \ldots \times B_{\varepsilon}(y_N)$ with strictly positive probability, which will imply that $\{x_k\}_{k \in \mathbb{N}}$ enters $\mathcal{F}$ with strictly positive probability.

For each $k = 1, \ldots, N$, consider the probability measure $\nu_{\theta}$ defined as
\begin{equation*}
    \nu_{\theta}(\mc{A}) = P(x_k \in S \ | \ x_{k-1}) = \int_{\mc{M}^{-1}_{x_{k-1}}(\mc{A})} \pi^C_{\theta}(a | x_{k-1}) \ da,
\end{equation*}
for any $\mc{T}_{\#} \ell$-measurable subset $\mc{A}$ of $\mathcal{S}$. Note that $\nu_{\theta}$ is absolutely continuous with respect to $\mc{T}_{\#} \ell$, written $\nu_{\theta} \ll \mc{T}_{\#} \ell$, since $\mc{T}_{\#} \ell(\mc{A}) > 0$ if and only if {$\ell(\mc{M}^{-1}_{x_{k-1}}(\mc{A})) > 0$}, by Assumption \ref{asum:volume_preservation}. Then, the Lebesgue-Radon-Nikodym Theorem from \cite[\S3.2]{folland1999real} implies that there exists a $\mc{T}_{\#} \ell$-integrable function $f_{\theta} : \mathcal{S} \rightarrow \mathbb{R}$, called the Radon-Nikodym derivative of $\nu_{\theta}$, such that $\nu_{\theta}(\mc{A}) = \int_S f_{\theta}(x) dx$ (see \cite{folland1999real} for details). To make the link between $f_{\theta}$ and $\nu_{\theta}$ perfectly clear, let us write $$f_{\theta}(x) = \int_{\mc{M}^{-1}_{x_{k-1}}(x)} \pi^C_{\theta}(a | x_{k-1}) \ da,$$ $$\nu_{\theta}(\mc{A}) = \int_S \int_{\mc{M}^{-1}_{x_{k-1}}(x)} \pi^C_{\theta}(a | x_{k-1}) \ da \ dx.$$
% \begin{align*}
%     f_{\theta}(x) &= \int_{\mc{M}^{-1}_{x_{k-1}}(x)} \pi^C_{\theta}(a | x_{k-1}) \ da, \\
%     \nu_{\theta}(\mc{A}) &= \int_S \int_{\mc{M}^{-1}_{x_{k-1}}(x)} \pi^C_{\theta}(a | x_{k-1}) \ da \ dx.
% \end{align*}

By Assumptions \ref{asum:volume_preservation} and \ref{asum:positive_probability}, we also have $\mc{T}_{\#} \ell \ll \nu_{\theta}$. Since both $\nu_{\theta} \ll \mc{T}_{\#} \ell$ and $\mc{T}_{\#} \ell \ll \nu_{\theta}$, the two measures are said to be {\em equivalent}, meaning that they agree on which sets have measure zero. Since $\mc{T}_{\#} \ell$ and $\nu_{\theta}$ are equivalent, a standard result from real analysis allows us to take the Radon-Nikodym derivative $f_{\theta}$ to be strictly positive $\mc{T}_{\#} \ell$-almost everywhere.
As a first consequence, notice that
\begin{align}
    P \Big( &(x_0, x_1, x_2) \in \{y_0\} \times B_{\varepsilon}(y_1) \times B_{\varepsilon}(y_2) \ | \ x_0 = y_0 \Big) \nonumber \\
    &= P \Big( (x_1, x_2) \in B_{\varepsilon}(y_1) \times B_{\varepsilon}(y_2) \ | \ x_0 = y_0 \Big) \nonumber \\
    &\quad \times P(x_0 = y_0) \nonumber \\
    &= P \Big( (x_1, x_2) \in B_{\varepsilon}(y_1) \times B_{\varepsilon}(y_2) \ | \ x_0 = y_0 \Big) \nonumber \\
    &= \int_{B_{\varepsilon}(y_1)} \int_{T^{-1}_{x_1} (B_{\varepsilon}(y_2))} \pi^C_{\theta} (a_1 | x_1) f_1(x_1) \ da_1 \ dx_1 \label{eqn:positive1} \\
    &= \int_{B_{\varepsilon}(y_1)} \int_{T^{-1}_{x_1} (B_{\varepsilon}(y_2)} \pi^C_{\theta} (a_1 | x_1) \nonumber \\
    &\quad \times \left[ \int_{T^{-1}_{x_0}(x_1)} \pi^C_{\theta} (a_0 | x_0) \ da_0 \right] \ da_1 \ dx_1. \nonumber
\end{align}

Given Assumption \ref{asum:positive_probability},  \eqref{eqn:positive1} is strictly positive, since $f_1$ is strictly positive almost everywhere and the integrals are taken over sets of positive volume. Then, we have

\begin{align}
    P \Big( &(x_1, \ldots, x_{N-1}, x_N) \in B_{\varepsilon}(y_1) \times \ldots \nonumber \\
    &\times B_{\varepsilon}(y_{N-1}) \times (B_{\varepsilon}(y_N) \cap \mc{F}) \ | \ x_0 = y_0 \Big) \label{eqn:positive2} \\
    = &\int_{B_{\varepsilon}(y_1)} \int_{T^{-1}_{x_1}(B_{\varepsilon}(y_2))} \pi^C_{\theta}(a_1 | x_1) \nonumber \\
    &\times \int_{B_{\varepsilon}(y_2)} \int_{T^{-1}_{x_2}(B_{\varepsilon}(y_3))} \pi^C_{\theta}(a_2 | x_2) \times \ldots  \nonumber \\
    &\times \int_{B_{\varepsilon}(y_{N-1})} \int_{T^{-1}_{x_{N-1}}(B_{\varepsilon}(y_N) \cap \mc{F})} \pi^C_{\theta}(a_{N-1} | x_{N-1}) \nonumber \\
    &\times f_{N-1}(x_{N-1}) \ da_{N-1} \ dx_{N-1} \times \ldots \nonumber \\
    &\times f_2(x_2) \ da_2 \ dx_2 \cdot f_1(x_1) \ da_1 \ dx_1. \nonumber
\end{align}

In the innermost integral, we have $\mc{T}_{\#} \ell(B_{\varepsilon}(y_N) \cap \mc{F}) > 0$, since both sets are open and their intersection is non-empty by hypothesis. Finally, given Assumption \ref{asum:positive_probability}, we have that \eqref{eqn:positive2} is strictly positive, since all integrals are taken over sets of positive volume and $f_i$ is strictly positive almost everywhere, for each $i \in \{1, \ldots N-1\}$. This implies that every Markov chain $\{x_k\}_{k \in \mathbb{N}}${, with $x_0 \sim \mc{D}$,} induced by $\pi_\theta^C$ has a positive probability. Consequently, we can conclude that every $\pi_\theta^C$-induced Markov chain is $\mc{T}_{\#} \ell-$irreducible.
\end{proof}

\vspace{0.75em}

\begin{proof}[\textbf{Corollary \ref{corol:well-defined}}]
    From Theorem \ref{thm:irreducibility}, we know that the $\pi_\theta^C-$induced Markov chain is irreducible, which implies that there every induced Markov chain has a unique occupancy or probability measure. Therefore, we can conclude that the expectation operator $\mathbb{E}_{\pi_\theta^C}$ in the safe RL objective in \eqref{eq:obj_fn} is well-defined.
\end{proof}

\vspace{0.75em}

\begin{proof}[\textbf{Theorem \ref{thm:disc-samp-comp}}]
    % Using the fact that $x \leq \|x\|$ and Corollary \ref{cor:app:weak-wgd}, we can write
    % \begin{equation}\label{eq:wgd-1}
    %     J^C_{\theta}(\theta) - J^C_{\theta}(\theta^*) \leq \alpha \norm{ \nabla_{\theta} J^C_{\theta}(\theta) } + \beta.
    % \end{equation}
    Define $\bar{J}^{C^*}_{\theta} := J^{C^*}_{\theta} + \beta$. Then, from \eqref{eq:wgd}, we have the following: 
    \begin{equation}\label{eq:wgd-2}
        \frac{1}{\alpha} \max\{0, J_{\theta}^C - \bar{J}^{C^*}_{\theta}\} \leq \norm{ \nabla_{\theta} J^C_{\theta}(\theta) }.
    \end{equation}
    We write $J_\theta^C$ at $\theta_t$, at time step $t$, as $J_\theta^C(\theta_t)$. Then, we note that, for time-step $t \geq 0$,
    \begin{equation}\label{eq:wgd-3}
        \begin{aligned}
            J^C_{\theta}(\theta_{t+1}) - \bar{J}^{C^*}_{\theta} =& J^C_{\theta}(\theta_{t+1}) - J^C_{\theta}(\theta_t) \\
            &+ J^C_{\theta}(\theta_t) - \bar{J}^{C^*}_{\theta}.
        \end{aligned}
    \end{equation}
    Using the Lipschitzness condition \eqref{eq:lip-grad} and the second-order Taylor expansion gives
    \begin{equation}\label{eq:wgd-4}
        \begin{aligned}
            J^C_{\theta}(\theta_{t+1}) - J^C_{\theta}(\theta_t) \leq& \langle  \nabla_{\theta} J_{\theta}^C(\theta_t), \theta_{t+1} -\theta_t \rangle \\
            &+ \frac{L}{2} \|\theta_{t+1} -\theta_t \|^2_2,
        \end{aligned}   
    \end{equation}
    and we therefore have that 
    \begin{equation}\label{eq:wgd-5}
        \begin{aligned}
            J^C_{\theta}(\theta_{t+1}) - \bar{J}^{C^*}_{\theta} \leq& J^C_{\theta}(\theta_t) - \bar{J}^{C^*}_{\theta} \\
            &+ \langle \nabla_{\theta} J_{\theta}^C(\theta_t), \theta_{t+1} -\theta_t \rangle \\
            &+ \frac{L}{2} \|\theta_{t+1} -\theta_t \|^2_2.
        \end{aligned} 
    \end{equation}

    Recalling the update rule $\theta_{t+1} = \theta_{t} - \eta_t \widehat{ \nabla_{\theta} } J_{\theta}^C(\theta_t)$, by the unbiasedness of the gradient estimator we have
    \begin{equation}\label{eq:wgd-6}
         \begin{aligned}
            J^C_{\theta}(\theta_{t+1}) - \bar{J}^{C^*}_{\theta} \leq& J^C_{\theta}(\theta_t) - \bar{J}^{C^*}_{\theta} \\
            &- \eta_t \langle  \nabla_{\theta} J_{\theta}^C(\theta_t) , \widehat{ \nabla_{\theta} } J_{\theta}^C(\theta_t)  \rangle \\
            &+\frac{L}{2} \eta_t^2 \norm{\widehat{ \nabla_{\theta} } J_{\theta}^C(\theta_t) }^2.
        \end{aligned}  
    \end{equation}
    
    Let $\mc{F}_t = \sigma(\theta_{\theta}, \eta_{\theta}; 0 \leq k \leq t)$ denote the $\sigma$-algebra generated by all randomness in the system up until time $t$, and let $\mathbb{E}_t [ \cdot ] = \mathbb{E} [ \cdot | \mc{F}_{t-1} ]$.
    Taking expectations over the foregoing yields
    \begin{equation}\label{eq:wgd-7}
         \begin{aligned}
            &\mathbb{E}_t\Big[J^C_{\theta}(\theta_{t+1}) - \bar{J}^{C^*}_{\theta} \Big] \\
            &\leq  \mathbb{E}_t \Big[J^C_{\theta}(\theta_t) - \bar{J}^{C^*}_{\theta} - \eta_t \langle  \nabla_{\theta} J_{\theta}^C(\theta_t) , \widehat{ \nabla_{\theta} } J_{\theta}^C(\theta_t) \rangle \\
            &\quad +\frac{L}{2} \eta_t^2 \norm{\widehat{ \nabla_{\theta} } J_{\theta}^C(\theta_t) }^2 \Big].
        \end{aligned}  
    \end{equation}   
    
    Given $\mathbb{E}_t [\widehat{ \nabla_{\theta} } J_{\theta}^C(\theta_t) ] = \nabla_{\theta} J_{\theta}^C(\theta_t)$ from Assumption \ref{asum:zero-bdd-grad}, and recalling the definition of the variance of a random variable, we can write the following
    \begin{equation}\label{eq:wgd-8}
         \begin{aligned}
            &\mathbb{E}_t \Big[J^C_{\theta}(\theta_{t+1}) - \bar{J}^{C^*}_{\theta}  \Big]\\
            &\leq  J^C_{\theta}(\theta_t) - \bar{J}^{C^*}_{\theta} 
            - \eta_t \langle \nabla_{\theta} J_{\theta}^C(\theta_t), \nabla_{\theta} J_{\theta}^C(\theta_t) \rangle \\
            &\quad +\frac{L}{2} \eta_t^2 Var(\widehat{\nabla_{\theta} } J_{\theta}^C(\theta_t) ) +\frac{L}{2} \eta_t^2\Big(\mathbb{E}_t\Big[\norm{\widehat{\nabla_{\theta} } J_{\theta}^C(\theta_t)} \Big]\Big)^2 \\
            &=  J^C_{\theta}(\theta_t) - \bar{J}^{C^*}_{\theta} - \eta_t  \norm{\nabla_{\theta} J_{\theta}^C(\theta_t)}^2 \\
            &\quad +\frac{L}{2} \eta_t^2 Var(\widehat{\nabla_{\theta} } J_{\theta}^C(\theta_t) ) +\frac{L}{2} \eta_t^2\norm{{\nabla_{\theta} } J_{\theta}^C(\theta_t)}^2 \\
            &=  J^C_{\theta}(\theta_t) - \bar{J}^{C^*}_{\theta} - \eta_t \Big(1 - \frac{L}{2} \eta_t \Big) \norm{{\nabla_{\theta} } J_{\theta}^C(\theta_t)}^2 \\
            &\quad + \frac{L}{2} \eta_t^2 Var(\widehat{\nabla_{\theta} } J_{\theta}^C(\theta_t)).
        \end{aligned}  
    \end{equation}   
    Using Assumption \ref{asum:zero-bdd-grad}, we finally obtain
    \begin{equation}\label{eq:wgd-10}
         \begin{aligned}
            &\mathbb{E}_t \Big[J^C_{\theta}(\theta_{t+1}) - \bar{J}^{C^*}_{\theta} \Big]\\
            &\leq  J^C_{\theta}(\theta_t) - \bar{J}^{C^*}_{\theta} - \eta_t \Big(1 - \frac{L}{2} \eta_t \Big) \norm{{\nabla_{\theta} } J_{\theta}^C(\theta_t)}^2 \\
            &\quad +\frac{L}{2} \eta_t^2 \frac{V}{N}.
        \end{aligned}  
    \end{equation}
     % For a sequence of Lipschitz constants $\{l(K_i)\}$, suppose that the largest Lipschitz constant is given by $L$.
     Further, suppose that we pick a step size $\eta_t \leq \frac{1}{L}$; then it follows that $(1-\frac{L}{2}\eta_t \geq \frac{1}{2})$. We can then re-write the above inequality as follows
    \begin{equation}\label{eq:wgd-11}
         \begin{aligned}
            &\mathbb{E}_t \Big[J^C_{\theta}(\theta_{t+1}) - \bar{J}^{C^*}_{\theta} \Big]\\
            &\leq  J^C_{\theta}(\theta_t) - \bar{J}^{C^*}_{\theta} - \frac{\eta_t } {2}\norm{{\nabla_{\theta}} J_{\theta}^C(\theta_t)}^2  +\frac{L}{2} \eta_t^2 \frac{V}{N}.
        \end{aligned}   
    \end{equation}
    
    At this point, we can now follow the proof of Theorem $6.1$ in \cite{montenegro2024learning}  with $L_2 = L$ to obtain the bound
    \begin{align}
            &\mathbb{E} \big[  J_{\theta}^C( \theta_T) \big] - J_{\theta}^{C^*} \nonumber \\
            &\leq \beta + p_T \left(1 - \sqrt{\frac{\eta^3 L V}{4 \alpha^2 N}} \right)^T + \sqrt{\frac{L V \eta \alpha^2}{N}}.
    \end{align}    
The remainder of the proof now follows in a straightforward manner from the proof of \cite[Theorem 6.1]{montenegro2024learning}.
\end{proof}

\end{document}